\def\BibTeX{{\rm B\kern-.05em{\sc i\kern-.025em b}\kern-.08em
    T\kern-.1667em\lower.7ex\hbox{E}\kern-.125emX}}
\definecolor{Reds}{RGB}{0,0,0}
\definecolor{Blues}{RGB}{0,0,0}
\definecolor{Iron}{rgb}{0.811,0.815,0.815}
\def\BState{\State\hskip-\ALG@thistlm}
\algnewcommand\algorithmicforeach{\textbf{for each}}
\begin{document}

\title{Task-driven Semantic-aware Green Cooperative Transmission Strategy for Vehicular Networks
}
\author{Wanting Yang, Xuefen Chi, Linlin Zhao, Zehui Xiong, Wenchao Jiang \thanks{
This research is supported by National Natural Science Foundation of China under Grant 62271228,
the Jilin Scientific and Technological Development Program under Grant 20230101063JC and the China Scholarship Council CSC NO. 202106170088.
This research is also supported by the National Research Foundation, Singapore, and Infocomm Media Development Authority under its Future Communications Research \& Development Programme. The research is also supported by the SUTD SRG-ISTD-2021-165, the SUTD-ZJU IDEA Grant (SUTD-ZJU (VP) 202102), and the Ministry of Education, Singapore, under its SUTD Kickstarter Initiative (SKI 20210204).

W.~Yang is with the Department of Communications Engineering, Jilin University, Changchun, China, and also with Information Systems Technology and Design Pillar, Singapore University of Technology and Design, Singapore. Email: yangwt18@mails.jlu.edu.cn. 
X. Chi, and L. Zhao are with the Department of Communications Engineering, Jilin University, Changchun, China. Email: chixf@jlu.edu.cn, zhaoll13@mails.jlu.edu.cn
Z.~Xiong and W. Jiang are with Information Systems Technology and Design Pillar, Singapore University of Technology and Design Singapore. Email: zehui\_xiong@sutd.edu.sg. wenchao\_jiang@sutd.edu.sg 
}
}

\makeatletter
\setlength{\@fptop}{0pt}
\makeatother

\maketitle

\vspace{-1.8cm}
\begin{abstract}
Considering the infrastructure deployment cost and energy consumption, it is unrealistic to provide seamless coverage of the vehicular network. The presence of uncovered areas tends to hinder the prevalence of the in-vehicle services with large data volume. To this end, we propose a predictive cooperative multi-relay transmission strategy (PreCMTS) for the intermittently connected vehicular networks, fulfilling the 6G vision of  semantic and green communications. Specifically, we introduce a task-driven knowledge graph (KG)-assisted semantic communication system, and model the KG into a weighted directed graph from the viewpoint of transmission. Meanwhile, we identify three predictable parameters about the individual vehicles to perform the following anticipatory analysis. Firstly, to facilitate semantic extraction, we derive the closed-form expression of the achievable throughput within the delay requirement. Then, for the extracted semantic representation, we formulate the mutually coupled problems of  semantic unit assignment and predictive relay selection as a combinatorial optimization problem, to jointly optimize the energy efficiency and semantic transmission reliability. {\color{Blues}To find a favorable solution within limited time,} we proposed a low-complexity algorithm based on Markov approximation. The promising performance gains of the PreCMTS are demonstrated by the simulations with realistic vehicle traces generated by the SUMO traffic simulator.
\end{abstract}

\begin{IEEEkeywords}
Vehicular network, store--carry--forward, proactive cooperative transmission, semantic--aware, Markov approximation
\end{IEEEkeywords}

\newtheorem{definition}{Definition}
\newtheorem{lemma}{Proposition}
\newtheorem{theorem}{Theorem}

\newtheorem{property}{Property}

\vspace{0cm}

\section{Introduction}

\vspace{0cm}

\IEEEPARstart{T}{he} burgeon of the intelligent transportation system has spawned numerous innovative in-vehicle services to make mobility much safer and easier. Most services such as road sign recognition and situation understanding  heavily rely on scene understanding~\cite{luettin2022survey,kim2021accelerating,halilaj2021knowledge}, which are characterized by the large data volume. However, due to the high cost of infrastructure deployment and energy consumption, it is unrealistic to install sufficient
roadside units (RSUs) to provide seamless coverage~\cite{liu2022elastic}. The presence of low-throughput intermittently connected vehicular networks (ICVNs) inevitably hinders the~popularity of~these~services.

Thanks to the boom in artificial intelligence, semantic communication (SemCom)\footnote{{\color{Blues}In} our work, SemCom refers to the communication that reaches the semantic level or the effectiveness level, that are defined by Weaver in~\cite{weaver1953recent}. } has evolved from a theoretical concept to a 6G enabler. Exploiting the intelligence of vehicles, SemCom can achieve a significant reduction in transmission burden, thus mitigating the impact of ICVNs on quality of service. For example, deep learning (DL) is now commonly used to perform human-like understanding at transmitters and  receivers, which are termed as semantic encoding and semantic decoding, respectively~\cite{yang2022semantic}. Therein, the irrelevant information about the target communication task is filtered out before transmission, and only a small data volume carrying valuable information is transmitted to receivers for the downstream inference task~\cite{9979702}. The promising performance gains achieved by SemCom in low channel conditions has been widely demonstrated~\cite{lee2019deep,xie2021deep,weng2021semantic}. 
Nonetheless, the black box nature of the DL-based SemCom results in low social acceptance. Moreover, the focus of existing SemCom research is mostly on the semantic processing of transceivers, where  wireless environment  is simplified to a channel model,  such as Rayleigh channel and  Rician channel.
This makes it infeasible for dynamic complex vehicle networks, where the average channel gain experienced by users changes significantly.  Thus, an explainable and generalized SemCom paradigm is called~for.

Fortunately, the recent studies on  the convergence of knowledge graph (KG) and explainable computer vision, holds promise towards the mentioned expectations. KG  can  provide a structured semantic representation (SR) for road traffic scenes~\cite{luettin2022survey}, which can be seen as a container of semantic information. In contrast to the underlying raw data formats of the practical scene,  the great extensibility of KG allows the KG-based SR to be partially updated according to the dynamic changes of  scenes, e.g., new roadblocks~\cite{qiu2020knowledge}. Meanwhile, the semantic information for different target tasks can be flexibly extracted in form of a sub-KG. For instance, for users interested in the road traffic, 
only the sub-KG related to pedestrian and traffic flow needs to be transmitted, and building-related sub-KG can be automatically filtered out.  
Nonetheless, a completed end-to-end model of a universal KG-based SemCom system is still a gap in the available research.
Furthermore, the distinctive feature of SemCom  lies in that data is assigned diverse significance~\cite{yang2022semantic}. 
From a well-established KG,  both the semantic importance of each semantic unit (SU) and the number of the bits required to carry SU viewed from the physical form can be obtained,  which  creates the opportunity for the finer-grained semantic-aware transmission strategy design.
 For instance, the SUs of greater importance can be transmitted with higher power, wider bandwidth, or more reliable links~\cite{liew2022economics} to enhance the semantic transmission reliability. However, this cannot be easily realized by straightforward refinements to existing schemes.



Especially, for ICVNs, most research efforts are devoted into the multi-hop relay transmission for lightweight services with strict delay requirements~\cite{liu2022mobility,zhang2021efficient},~where~the relays serve for real-time amplification/decoding and forwarding.
If they are applied to the services with large volume, much unwarranted communication overhead and cache pressure on the relays are introduced. Given this, the \textit{one-hop} store–carry–forward scheme (SCFS)~\cite{kolios2010load} is more appropriate for the considered case, where  the mobility of relay vehicles is to utilized to physically propagate information messages to reduce the outage areas~\cite{kolios2010load}. 
However, the existing studies on SCFS only concentrate on the physical layer, such as minimizing the outage time~\cite{wang2016cooperative}, statistical analysis of achievable throughput gain~\cite{liu2022elastic}.
Few of them care about the properties of the communication task, even for the content size and maximum acceptable delay. As a result, they cannot achieve on-demand~fulfillment and are further away from semantic awareness. More critically, as the energy efficiency varies greatly depending on the vehicle location, the total energy consumption is strongly related to the selected relays. Thus, these ready-made SCFSs with uniform relay selection rules tend to cause different levels of energy waste for a specific task, depending on real-time on-road vehicles' location and speed. This goes against green communication in 6G.

In light of the above, to meet the 6G vision of SemCom and green communication, we propose a novel predictive cooperative multi-relay transmission strategy (PreCMTS) for large  download for ICVNs. In the  strategy, the SR selection, relay selection, and SU assignment are all highly related to three predictable parameters: the residual dwell time of the vehicles in their associated RSUs, as well as the encounter time and V2V link lifetime of each relay with the target vehicle. The major contributions are highlighted as follows.

\begin{itemize}
    \item We introduce a general task-driven KG-assisted SemCom system model, where both the semantic encoding and decoding are performed based on the KG. Moreover, from the standpoint of transmission, we
model the KG as a weighted directed graph (wDG), where the vertices represent the indivisible SUs that are the embedding of real-world objects and their abstract relationships, and the directed edges characterize the dependence of SUs. To enable semantic-aware transmission, the  significance degree and the data size viewed from semantic and physical level for each SU are recorded as edge weights.  
    \item To facilitate semantic extraction (SE) to get an appropriate SR, we derive the closed-form expression of the achievable throughput within the maximum acceptable delay according the current road traffic situation. Moreover, for the selected SR, we formulate the  mutually coupled problems of predictive relay selection and SU assignment as a combinatorial optimization problems with the aim to minimize energy consumption while guaranteeing the  semantic transmission reliability under imperfect speed prediction. Therein, the constraints of the V2V link interference, the end-to-end transmission  delay\footnote{The end-to-end delay in our work refers to the time interval between the moment when the target vehicle sends request and the moment when the target vehicle receives all the requested data.}, and the bottleneck of the two cascade store-carry-forward links are all considered.
    \item {\color{Blues}To find a favorable solution within limited time,} we design a low-complexity multi-threaded search algorithm based on Markov approximation. Moreover, we devise an SU assignment algorithm following the basic SCFS in~\cite{liu2022elastic} as a baseline to generate the initial state. From the simulation results,  in PreCMTS, the SUs with high semantic significance are more likely assigned to the direct transmission link to enhance semantic reliability and the vehicles close to the RSU are preferred to be selected as relays to pre-stores SUs to save energy compared to the baseline. The promising performance gains in terms of energy saving, semantic transmission reliability, and semantic energy efficiency are demonstrated.
\end{itemize}

In the following sections, we first review the related works about SemCom and SCFS, respectively. Then, we describe the system model and highlight the overview of the proposed semantic-aware PreCMTS in Section III. Then, the  details of the proposed scheme  are presented in Section IV. Section V presents the simulation results, and Section VI concludes this paper. Besides, the notations of relevant parameter symbols are listed in Table I.

\begin{table*}
\footnotesize
 \centering
 \caption{{ List of relevant notations.}}
 \begin{tabular}{|c|m{6cm}||c|m{5.5cm}|}
  \hline
 Notation & Description & Notation & Description  \\
  \hline
  \hline
${r_{\text{I}}}\left( {{r_{\text{V}}}} \right)$ & Communication radius of RSU (vehicle) &  ${R_{{\text{I}}}}\left( {{R_{{\text{V}}}}} \right)$ & Data transmission rate of V2I (V2V) link \\ \hline
$D_i^{\text{I}}$ & Maximum duration of V2I link for vehicle ${v_i}$  & ${T_{\max }}$ & Maximum acceptable delay \\ \hline
$\Delta _i^{\text{T}}$ & Moment when relay vehicle $v_i^{\text{R}}$  enters the communication range of the target vehicle & $D_{i}^{{\text{T}}}$ & Maximum duration of the V2V link between target vehicle and relay vehicle $v_i$\\ \hline
$\hat C_i^{{\text{I}}}$ & Maximum data amount that can be transmitted via V2I link to vehicle $v_i \in {\cal V}$ & ${\hat \delta _i}$ & Moment when relay $v_i$ starts to forward data in  achievable throughput analysis \\ \hline
$ \hat C_i^{{\text{V}}}$ & Maximum data amount transmitted to vehicle $v_0$ by relay $v_i \in {\cal V}_{\text{R}}$ for a given ${\mathbf{\Phi }}$ & $t_i^{{\text S}_{\text v}}$ & Moment when relay $v_i$ starts to forward data to $v_0$ for a given transmission strategy\\ \hline
${\beta _j}$ & Data size of SU $j$   & ${\alpha _j}$  & Contribution of SU $j$ to the accuracy of SR\\ \hline
 \end{tabular}
 \label{tbl:mae}

\end{table*}

\section{Related works}

\vspace{0cm}
\subsection{Semantic Communication}
Based on our previous review works~\cite{yang2022semantic,9979702}, the existing research on SemCom can be broadly classified into four categories depending on the SE method. The first and most studied category is deep-learning (DL)-based end-to-end SemCom.
The employed semantic encoder and decoder are two separate learnable neural networks, and linked through a layer for modeling random channels~\cite{xie2020lite}. They are trained jointly, based on a complete data set shared by both senders and receivers. Thanks to the advancement of  DL models, e.g., Transformer, the high efficient SE for  text, image, and audio, achieves significantly performance gains especially at  low signal-to-noise ratio (SNR) region~\cite{lee2019deep,zhou2021semantic,weng2021semantic}. 
Nevertheless, the back-propagation in DL paradigm requires the loss function to be differentiable, which hinders the sophisticated non-differentiable semantic metrics from being applied to guide the training. To solve this issue,  deep reinforcement learning paradigm is adopted to perform SE~\cite{lu2022rethinking}. However, the above two categories of SemCom are  available only for the simplest point-to-point communication model, which cannot be directly applied to complex real-world scenarios. Moreover, the black box nature also restricts their social acceptance~\cite{yang2022semantic}. 

Meanwhile, with the development of the explainability of AI technologies, some researchers propose the knowledge base (KB)-assisted SemCom. Herein, the KB is a special database for semantic knowledge management, which consists of semantic elements embedded in the source data, the involving communication tasks, and the possible ways of reasoning  by communication participants~\cite{9979702}. 
 Up to now, there are two available kinds of general KB models. One is based on a hierarchical structure~\cite{farshbafan2022curriculum}, and the other is based on graph structure~\cite{thomas2022neuro,9979702}. 
 Moreover,  there have been several technical research~\cite{wang2022performance,zhou2022cognitive} on SemCom for text transmission based on the available the interconversion technologies for text and graph. However, in the above works, the resource allocation algorithm is still following the philosophy of traditional \textit{content-blind} resource allocation paradigm, i.e., allocating radio resources to per user according to their required data volume. 
In this sense, a semantic-aware transmission has not been achieved in a real sense.

In addition to the above three categories of SemCom, there is also a semantic-native SemCom paradigm, wherein the semantic information can be learned from iterative communications between intelligent agents~\cite{seo2021semantics}.
However, this study is still stuck in the theoretical analysis based on a simple ideal model. It remains a huge challenge to put it into practice. In this sense, we focus on the KB-assisted SemCom in our work.

\vspace{0cm}

\subsection{Store-Carry-Forward Scheme}
The core concept of SCFS is to utilize the mobility of relays  to physically propagate information messages~\cite{kolios2010load}, which is first proposed in~\cite{fall2003delay}. 
Specifically, in SCFS, a relay vehicle, which will pass the target vehicle within their uncovered area, pre-stores partial data requested by the target vehicle in advance over an available V2I link. It then carries and forwards data to the target vehicle until they encounter each other. 

Initially,
in~\cite{wu2013adaptive,bouk2015outage}, the authors investigate the optimization of the target vehicle speed control with the objective of minimization of the outage time.  However, it deviates from the design philosophy of user experience-oriented communication nowadays, and it is unrealistic to control the vehicle's speed without consideration of the actual traffic conditions and the driver's driving habits.   Additionally, the above works only focus on the unidirectional road model, and thus the mobility pattern of the vehicles is not fully exploited. 
To that end, a  bidirectional road is considered in~\cite{trullols2011cooperative,wang2016cooperative}, where the mobility
of vehicles can be utilized to physically propagate information messages.  Different from \cite{wu2013adaptive,bouk2015outage}, the authors in \cite{trullols2011cooperative,wang2016cooperative} propose some essential relay selection constraints on the relay link lifetime, residual dwell time, and buffer time, which can jointly determine which candidate vehicles can be picked as relays. All the above works just focus on the minimization of the outage time.

In~\cite {chen2016achievable,liu2022elastic}, the authors derive a closed-form expression of the
achievable throughput of the SCFS for a bidirectional road with vehicle flow obeying Poisson distribution. 
In~\cite{chen2016achievable}, two assumptions are made. The first one is that there is no possibility that the relay vehicle is still within the available V2V connection range, but it has no data to forward to the target vehicle. The second one is that there is no interference between different V2V links, i.e., the target vehicle can maintain multiple V2V links simultaneously. Furthermore,  in~\cite{liu2022elastic}, the authors propose an elastic-segment-based V2V/V2I cooperative strategy, where the second assumption is removed, and  a commonly used interference model~\cite{agarwal2004capacity} is adopted, that is, only one V2V link is active at any given time. The adopted assumption is strongly dependent on the
specific scenario, and thus compromising the generality of
their work.

Moreover, the existing works focus on the enhancement and evaluation of the physical layer performance. Few of them considers the demand of communication task and the energy efficiency of the communication system. To this end, a task-oriented SCFS is promising to embrace the green communication in 6G with on-demand fulfillment.


\vspace{0cm}

\section{{System  overview}}
\subsection{{Scenario Description}}
\label{Description}
\begin{figure*}[t]
 \centering
\includegraphics[scale = 0.68]{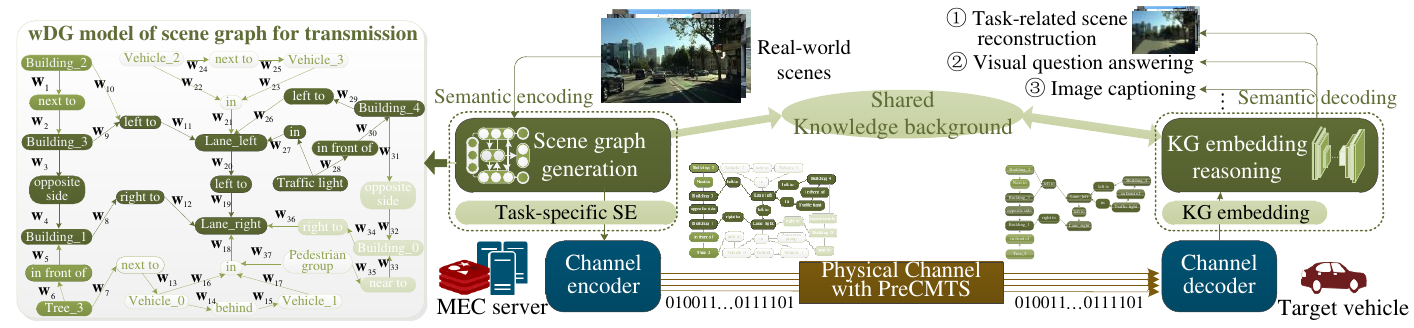}\\
 \caption{KG-assisted SemCom system model. 
 }
\label{KGmmodel}

\end{figure*}

This paper focuses on one segment of a bidirectional road, which runs through the coverage of two adjacent RSUs (indexed by RSU A and RSU B, respectively). 
The distance between the two RSUs is denoted by $H$, and the coverage radius of each RSU is denoted by $r_{\text I}$. Considering the restricted transmit power and the high deployment costs, we assume that there is an outage area between the  RSUs, i.e., $H > 2{r_{\text{I}}}$. 

 Without loss of generality, we assume that a vehicle within the coverage of RSU~A  sends a  request to multi-access edge computing (MEC) server for a large download. The maximum allowable delay of the services is denoted by ${T_{\max }}$. To complete the transmission within ${T_{\max }}$, we propose a  semantic-aware PreCMTS, which is performed by a central controller (CC) at the MEC server. The vehicle sending the request is referred to as the target vehicle and denoted by ${v_{\text{0}}}$. The relay candidates are the vehicles driving in the opposite direction to the target vehicle within the coverage of RSU~B.  The set of the relay candidates is  denoted by ${\mathcal{V}_\text{R}}$, and each of them is indexed by $v_i \in {\mathcal{V}_\text{R}}, i \in \left\{ {1,2, \ldots ,\left| {{\mathcal{V}_{\text{R}}}} \right|} \right\}$. Specifically, the serial number of the relay candidate is arranged according to the sequence of them entering the target vehicle's communication range.
 Similar to the RSUs, the communication range is the same for all vehicles, the radius of which is denoted by $r_\text{V}$. Since the RSU typically has stronger communication capability than the vehicle, we have $r_\text{I} > r_\text{V}$~\cite{chen2016achievable}.  Moreover, for ease of reference, we denote the set composed by the target vehicle and the relay candidates by $\mathcal{V}$, i.e., $\mathcal{V} = {\mathcal{V}_{\text{R}}} + \left\{ {{v_0}} \right\}$. In addition, we assume that the average speed remains constant~\cite{liu2022elastic}, and the  average speed of each vehicle ${v_i} \in \mathcal{V}$ is denoted by ${{\bar u}_i},i \in \left\{ {0,1,2, \ldots ,\left| {{\mathcal{V}_{\text{R}}}} \right|} \right\}$. To facilitate an energy efficient scheme, the vehicles are required to report the information about their speed and  position  to the MEC server, which enables the possibility of predictive relay selection and strategic pre-store the data in the relays under better channel states.

\subsection{{Transceiver Semantic Processing Model}}
\label{SKGmodel}
The proposed task-driven KG-assisted SemCom system model is shown in Fig.~\ref{KGmmodel}, where a  two-dimensional image is taken as an example of semantic encoding input. 

The semantic encoding performed at the MEC server consists of two modules. Firstly, the \textit{scene graph generation} module  bridges the gap between visual and semantic perception of the real-world scene\footnote{It typically undergoes four steps: {off-the-shelf object detectors}, {feature representation}, {feature refinement}, and {relationship prediction}~\cite{zhu2022scene}}.  Then, the well-developed scene graph, i.e., a KG, can be regarded as a container for all the semantic information implied by the scene.
It is composed of multiple linked triples in the form of $\left\langle {head\_object,relation,tail\_object} \right\rangle$, e.g., $\left\langle {building1,right,lane1} \right\rangle $. The embedding\footnote{The embedding is a low-dimensional vector (being in accord with Word2vec in NLP)~\cite{zhang2022knowledge}, which are obtained in the KG generation via an visual translation embedding methods~\cite{zhu2022scene}.} of each element in each triples are treated  as an undividable SU. 
To facilitate semantic-aware
transmission,  the scene graph is re-modeled as a mathematical form of wDG as shown on the left side of Fig.~\ref{KGmmodel}. The SUs are treated as the vertices. The directed edges retain the dependency between the two objects. 
In general, the significance of SUs varies for different tasks. For instance, for users who intend to check the map of a certain place, information about pedestrian and traffic flow on the road is no longer necessary; on the contrary, for users who prefer to know the road traffic, detailed information about the surrounding buildings can be ignored. {\color{Blues}Therefore, we assign an array, ${{\bf w}_j} = \left[ {{w_{j,1}}, \ldots, {w_{j,k}},\ldots,{w_{j,K}}} \right]$, as the weight corresponding to an SU $j$, where $K$ represents the number of the tasks and $w_k$ is in form of a binary tuple ${{w_{j,k}} = \left( {{\alpha _{j,k}},{\beta _{j,k}}} \right)}$, with ${\alpha _{j,k}}$ denoting the quantified importance degree of SU $j$ to  task $k$ and ${\beta _{j,k}}$ denoting the number of bits required to carry the information of SU $j$. 
Without loss of generality, only one task $k$ is considered in our work. 
 To simply the notation, the subscript $k$ is omitted in this manuscript, and the weight for SU $j$ is simplified to ${{w_{j}} = \left( {{\alpha _{j}},{\beta _{j}}} \right)}$.}
Then,  based on the wDG, the \textit{task-specific SE}  module extracts an SR in form of {\color{Blues}an edge-induced subgraph} of the original wDG. 
{\color{Blues}To ensure the completeness of the transmission within the allowable maximum delay $T_{\text {max}}$, the edges with higher semantic importance have priority to be added 
to the edge subset used to generate the SR, while ensuring that the total data size is less than the achievable throughput. On this premise, the cardinal number of edge subset, which determines the number of the chosen SUs, can be decided based on a specific trade-off between semantic accuracy and energy consumption.}
 Moreover,  we assume that each SU in the selected SR is encapsulated individually  according to the edge weight ${{\bf{w}}_j} = \left( {{\alpha _j},{\beta _j}} \right)$~\cite{zhang2022toward}. That is, during the transmission, the data for each SU cannot be further split.

\begin{figure*}[t]
 \centering
\includegraphics[scale = 0.25]{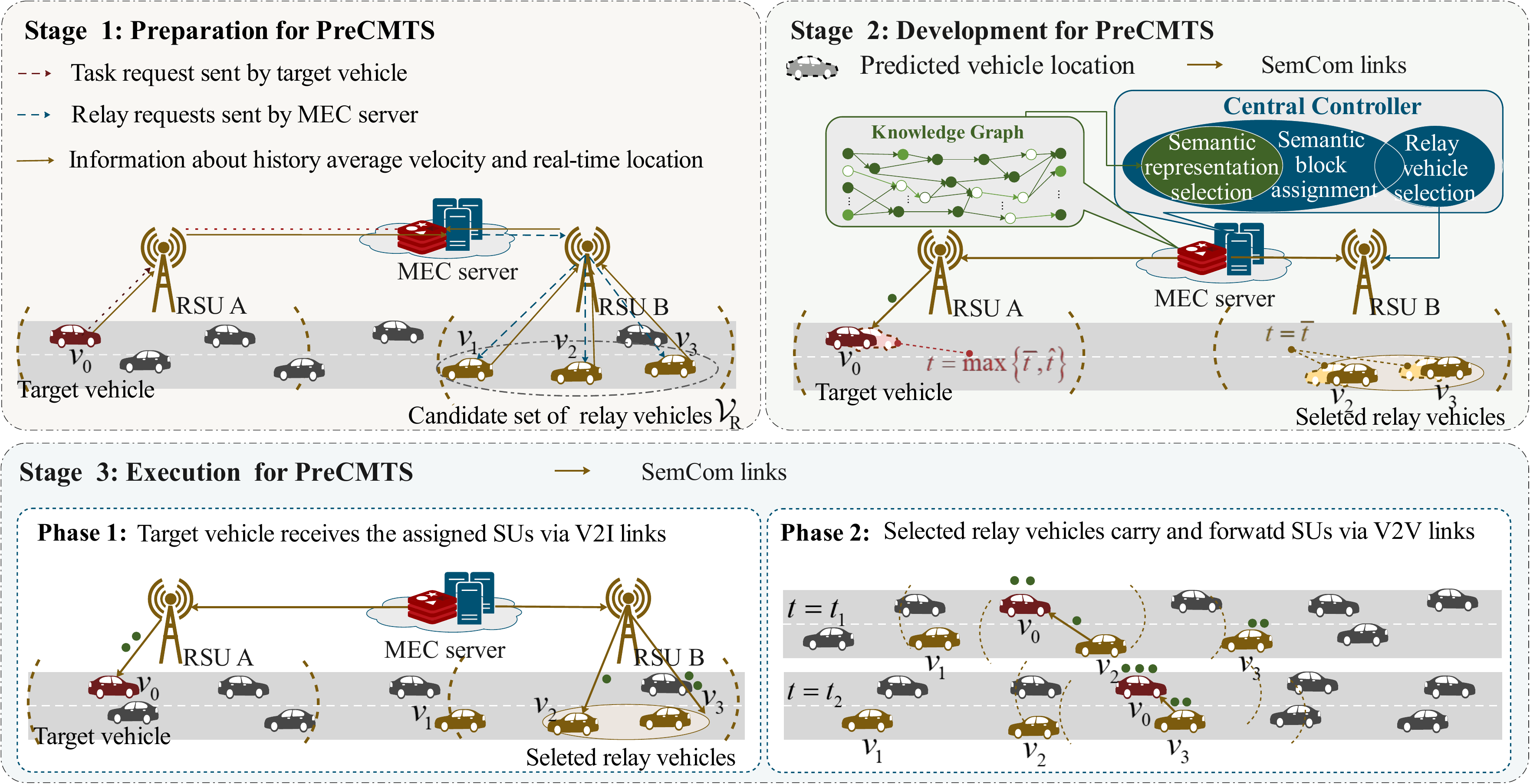}\\
 \caption{\color{Blues}Overview of the semantic-aware PreCMTS. 
 }
 \label{systemmodel}

\end{figure*}

Upon obtaining the complete task-related sub-KG, the target vehicle performs the semantic decoding, which is also accomplished by two modules. First, the \textit{KG embedding} module is responsible for embedding the objects and relations of the  sub-KG into a low-dimensional vector~\cite{zhu2022scene}. Then, taking the low-dimensional vector as the input, the \textit{KG-embedding reasoning} module performs the downstream semantic inference based on a cascaded sophisticated  network specially designed for the particular task, such as task-related scene reconstruction, visual question answering, and image captioning~\cite{zhu2022scene,zhang2022knowledge}.

It should be noted that, {\color{Blues}both sides of the communication are required to share their knowledge background about historical scenes and all the possible tasks, which allows the training process for semantic encoding and decoding to match each other.} Due to the limitation of space,  the  synchronization of  knowledge background are beyond the scope of this work.  Meanwhile, the communication overhead for the background knowledge, the computing resources for the KG generation and update, and the storage resources for the KG are not discussed in this paper but will be studied in the future works.


\vspace{-0.3cm}

\subsection{{\color{Blues}Wireless link transmission Model}}
\label{communicationmodel}
Without loss of generality, we assume that every passing vehicle is equipped with one antenna~\cite{liu2022elastic,chen2016achievable}, which allows a vehicle to maintain only one link at a time, either a V2I link or a V2V link. Meanwhile, to avoid interference, if a V2V link already exists within a vehicle communication range, it will not be able to transmit data~\cite{liu2022elastic}.
Moreover, considering the sophisticated technologies available in RSUs, such as frequency division multiplexing and multi-user beamforming, we assume that the RSU can simultaneously transfer data to multiple users without inter-user interference~\cite{liu2022elastic,wu2021v2v}. 

{\color{Blues}As depicted in Section~\ref{Description}, the extensively used disk model is employed  to characterize the V2I and V2V connection~\cite{liu2022elastic,chen2016achievable}. That is, any vehicle pair or vehicle-RSU pair is able to be connected if the distance between each other is less than $r_\text{V}$  or $r_\text{I}$~\cite{6180096}.  We denote the distance between any pair of transmitter and receiver by $d$.} The large-scale channel gain, then, can be characterized by the standard power-law path loss ${G_{x}}\left( {d} \right) = {b_{x}}{{d}^{ - {a_{x}}}}$, where  ${a_{x}}$ is the path loss exponent, ${b_{x}}$ is the reference path loss at a unit distance, and ${x} \in \left\{ {{\text{I,V}}} \right\}$ is set to differentiate the V2I and V2V links~\cite{su2022content,xu2021socially}. {\color{Blues}Furthermore, considering the high mobility of vehicles and the inevitable inter-vehicle large vehicle obstructions, e.g., buses, we adopt the $\mathcal{F}$ composite
fading model to characterize the small-scale fading, where the combined effects of multi-path and shadowing are taken into account~\cite{8638956}.  
We denote the small-scale channel gain by  $\tilde g$. Accordingly, the probability density function of  $\tilde g$ is expressed by~\cite{8638956}
\begin{equation}
    f\left( {\tilde g} \right) = \frac{{{m^m}{{\left( {{m_s} - 1} \right)}^{{m_s}}}{{\bar g}^{{m_s}}}{{\tilde g}^{m - 1}}}}{{B\left( {m,{m_s}} \right){{\left[ {m\tilde g + \left( {{m_s} - 1} \right)\bar g} \right]}^{m + {m_s}}}}},\label{pdf}
\end{equation}
where $m$, $m_s$, and ${\bar g}$ represents the number of clusters of multipath, shadowing shape, and average small-scale channel gain, respectively, and $B\left( { \cdot , \cdot } \right)$ denotes the beta function~\cite{8638956}. 
 We assume that the power control technique is adopted, where the decoding threshold of SNR for the V2I and V2V link are denoted by $\Upsilon _\text{I}$ and $\Upsilon _\text{V}$, respectively. Upon assuming perfect
capacity achieving coding, the achievable transmission rate is expressed by
\begin{equation}
    {R_x} = B_x\log \left( {1 + {\Upsilon _x}} \right), 
\end{equation}
where ${{x}} \in \left\{ {{\text{I,V}}} \right\}$. Moreover,
to simplify problem analysis, we assume that the bandwidth allocated to all the V2I links are fixed and the same~\cite{liu2022elastic}, i.e., ${B_{\rm{I}}} = {B_{\rm{V}}}$. 
 The instantaneous transmit power, then, is expressed as}
\textcolor{Blues}{
\begin{equation}
{p_x}\left( d \right) = \left\{ {\begin{array}{*{20}{c}}
{\frac{{{\Upsilon  _x}{\sigma ^2}}}{{{A_x}{G_x}\left( d \right)\tilde g}},}&{d \le {r_x}}\\
{0,}&{d > {r_x}}
\end{array}} \right.,
\label{P}
\end{equation}}
where ${{x}} \in \left\{ {{\text{I,V}}} \right\}$. Specifically, ${A_\text{I}}$ (${A_\text{V}}$) denotes the joint antenna gain of the transmitter and receiver of the V2I (V2V) link. Moreover, ${\Upsilon _{\text{I}}}$ (${\Upsilon _{\text{V}}}$) represents  {\color{Blues}the decoding threshold of signal-to-noise (SNR)} for the V2I (V2V) link.  {\color{Blues}Meanwhile, we assume that the small-scale gains are independently and identically
distributed (i.i.d.) among transmission time intervals. Then, the average transmit power with distance $d$ between the transmitter and receiver can be expressed by 
\begin{equation}
  \begin{aligned}
     &{{\bar p}_x}\left( d \right) = \mathbb{E}{_{\tilde g}}\left[ {{p_x}} \right] = \int_0^\infty  {\frac{{{\Upsilon _x}{\sigma ^2}}}{{{A_x}{G_x}\left( d \right)\tilde g}}f\left( {\tilde g} \right)d\tilde g}\\
     &=\frac{{{\Upsilon _x}{\sigma ^2}}}{{{A_x}{G_x}\left( d \right)}}\int_0^\infty  {{{\tilde g}^{ - 1}}f\left( {\tilde g} \right)d\tilde g}  \\
     & = \frac{{{\Upsilon _x}{\sigma ^2}}}{{{A_x}{G_x}\left( d \right)}}{\mathbb{E}}\left[ {{{\tilde g}^{ - 1}}} \right],\label{sm}
\end{aligned}  
\end{equation}
According to \eqref{pdf}, with the aid of \cite[eq. (3.194.3)]{Table}, the ${n^{{\rm{th}}}}$ moment of ${\tilde g}$ can be derived as 
\begin{equation}
{\mathbb{E}}\left[ {{{\tilde g}^n}} \right] = \frac{{{{\left( {{m_s} - 1} \right)}^n}{{\bar g}^n}\Gamma \left( {m + n} \right)\Gamma \left( {{m_s} - n} \right)}}{{{m^n}\Gamma \left( m \right)\Gamma \left( {{m_s}} \right)}}  \label{moment}
\end{equation}
 where  $\Gamma \left(  \cdot  \right)$ represents the gamma function. Substituting the case of $n=-1$ in \eqref{moment} into \eqref{sm}, we can obtain the final expression of  ${{\bar p}_x}\left( d \right)$ as below.
\begin{equation}
    {{\bar p}_x}\left( d \right) = \frac{{{\Upsilon _x}{\sigma ^2}}}{{{A_x}{G_x}\left( d \right)}}\frac{{m\Gamma \left( {m - 1} \right)\Gamma \left( {{m_s} + 1} \right)}}{{\left( {{m_s} - 1} \right)\bar g\Gamma \left( m \right)\Gamma \left( {{m_s}} \right)}}. \label{ap}
\end{equation}
For brevity, we rewrite \eqref{ap} as ${{\bar p}_x}\left( d \right) = {\rm{M}}\frac{{{\Upsilon _x}{\sigma ^2}}}{{{A_x}{G_x}\left( d \right)}}$, where ${\rm{M}} = \frac{{m\Gamma \left( {m - 1} \right)\Gamma \left( {{m_s} + 1} \right)}}{{\left( {{m_s} - 1} \right)\bar g\Gamma \left( m \right)\Gamma \left( {{m_s}} \right)}}$ is a constant.
 
 }

\begin{figure*}
\begin{minipage}[t]{0.48\linewidth}
 \centering
 \includegraphics[scale = 0.54]{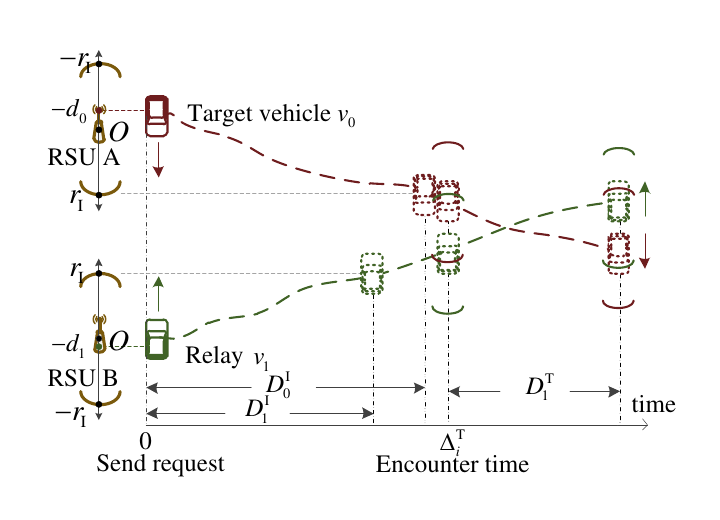}\\
 \caption{Diagram of vehicle encountering.
 }
 \label{meeting_diagram}
\quad
\end{minipage}
\begin{minipage}[t]{0.48\linewidth}
 \centering
 \includegraphics[scale = 0.38]{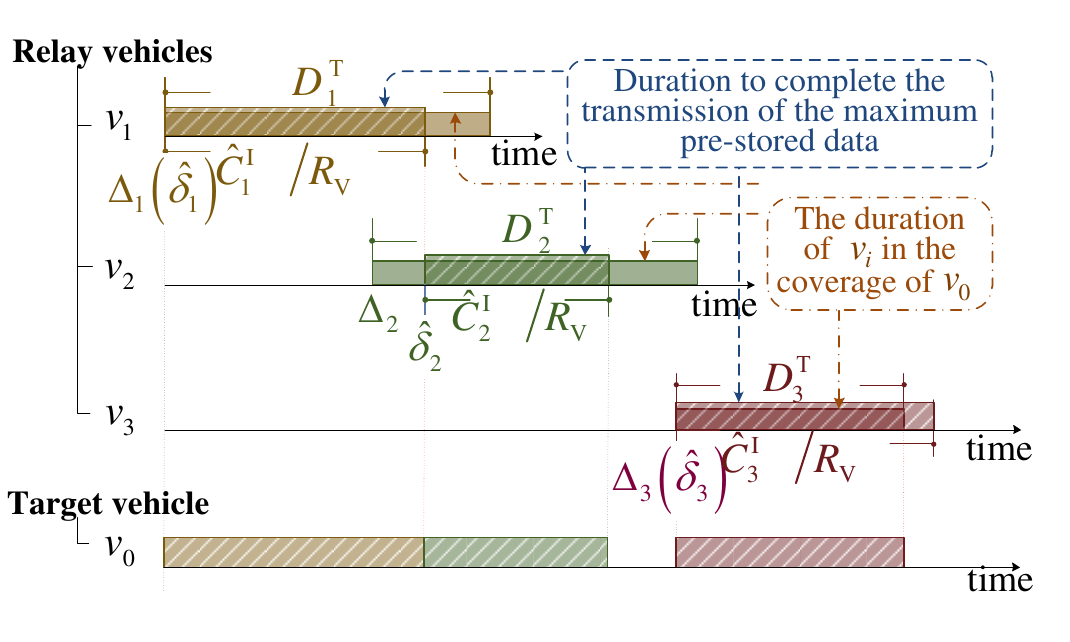}\\
 \caption{Illustration of V2V-link duration.
 }
 \label{max_V2V_duration}
\end{minipage}

\end{figure*}



\subsection{{Overview of Semantic-aware PreCMTS}}
\label{Dynamics}
As shown in Fig.~\ref{systemmodel}, the proposed PreCMTS consists of {\color{Blues}three} stages, {\color{Blues}namely {preparation for PreCMTS}, {development for PreCMTS}, and {execution  for PreCMTS}, respectively.}

In Stage~1, the vehicles and the MEC server exchange the necessary information for the strategy design. Specifically, when the target vehicle sends a task request to the MEC server via RSU~A, it sends the information about its current position and  average speed to the CC at the same time. Then, the CC broadcasts the request for cooperative transmission  to the vehicles in the coverage of RSU B.  After receiving the request, the relay candidates {\color{Blues}of the vehicles driving in the opposite direction  to the target vehicle} send their current position and average speed back to the CC.

In Stage~2, according to the information reported by the vehicles, the CC first predicts the key parameters about the vehicle trajectories. Then, based on the above predictable parameters, the CC derives the achievable throughput ${Q_{\max }}$ within  ${T_{\max }}$, which is  fed to the SE module to extract an appropriate SR. {\color{Blues}Then, we denote the computational latency for the high-dimensional optimization problem of the cooperative transmission strategy by $\bar t$~\footnote{\color{Blues}The value of $\bar t$ is jointly determined by the server computational capability and the expected performance gain of PreCMTS. It is to be noted that  the vehicle driving out of  RSU B's coverage  within $\bar t$ will not be able to act as a relay.}. Considering the high dynamic nature of the vehicle network, the CC first predicts the locations of the relay candidates at time $\bar t$ to ensure a well-matched PreCMTS to the practical world.  To mitigate the impact of computing latency on the transmission delay, the SUs with large $\alpha $ can be transmitted to the target vehicle in advance via V2I link upon the determination of SR, before the completion of the algorithm execution. We denote the end moment of SU's advance transmission by $\hat t$. Considering the indivisibility of SUs, the   CC needs to predict the location of the target vehicle at time $\max \left\{ {\bar t,\hat t} \right\}$.  }
Next, for the given SR, the CC develops a cooperative transmission strategy ${{\bf{\Phi }}^*}$ based on the predicted locations, which  jointly determines the {\color{Blues}mutually coupled} relay selection and SU assignment.

{\color{Blues}In Stage 3,}  the transmission process begins. It can be further divided into two phases.
 In phase~1, according to ${{\bf{\Phi }}^*}$, the SUs assigned to the direct V2I link are transmitted to the target vehicle directly via RSU A {\color{Blues}as scheduled}. At the same time, the other SUs  are transmitted simultaneously to the corresponding relay vehicles in advance via RSU~B.  When a relay  receives all the SUs assigned to it or it leaves the coverage of RSU~B, the corresponding V2I link is disconnected. Then, in phase~2, the relays forward pre-stored SUs to the target vehicle in order of encounter sequence.  It is to be noted that the relay selection is performed proactively in Stage 1. In this phase, only the V2V communication following strategy ${{\bf{\Phi }}^*}$ happens, and neither the target vehicle nor the CC needs to perform further relay selection. 
\section{ Predictive cooperative multi-relay transmission Strategy }\label{sec:optimization}
\subsection{{Preliminary}}
\label{sec:P}
In this section, we first introduce the predictive parameters used to develop the PreCMTS, i.e.,  the residual dwell time of the vehicles in their associated RSUs, as well as the encounter time and
V2V link lifetime of each relay with the target vehicle.  
{\color{Blues}To concise the notation system and without loss of the generality of the following analysis, we assume the computing latency $\bar t = 0$, that is, treating the moment when the target vehicle sends the request as the initial moment of the transmission process, and thus $\hat t = \bar t$}
For ease of illustration, we take  an example of a two-vehicle encounter process as shown in Fig.~\ref{meeting_diagram}.
 We denote the initial distance to the connected RSU and position of vehicle ${v_i} \in {\cal V}$ by ${d_i }$ and  $\ell_i $, respectively, {\color{Blues}where $0 < \left| {{l_i}} \right| = {d_i} \le {r_{\rm{I}}}$}. If the offset of the initial distance is in the same direction as the vehicle drives, $\ell_i=d_i$, and otherwise, $\ell_i=-d_i$. We denote the  average speed of  vehicle ${v_i} \in {\cal V}$ by ${{\bar u}_i}$ {\color{Blues}and the relative speed of the two vehicles by $\hat u_i = {{\bar u_i}} + {{\bar u_0}}$.} {\color{Blues}Considering the fact that, in practice, the road length is much larger than the road width and the RSU height, we ignore the road width and the RSU length~\cite{wang2016cooperative}.
As such, the communication distance of a V2I and V2V link at time $t$ can be expressed by $d_i^{\rm{I}}\left( t \right) = \left| {{l_i} + {{\bar u}_i}t} \right|$ and $d_i^{\rm{V}}\left( t \right) = \left| {{r_{\rm{V}}} - {{\hat u}_i}\left( {t - \Delta _i^{\rm{T}}} \right)} \right|$, respectively. }
Meanwhile, the residual dwell time of the vehicles in their associated RSUs can be predicted by
\begin{equation}
    D_i^{\rm{I}} = \frac{{{r_{\rm{I}}} - {\ell _i}}}{{{{\bar u}_i}}},i \in \left\{ {0,1, \ldots ,\left| {{{\cal V}_{\rm{R}}}} \right|} \right\}. 
\end{equation}
We refer to  driving into the communication range of the target vehicle as an encounter with the target vehicle. The encounter time between relay ${v_i}$ and  vehicle ${v_0}$ can be predicted by
\begin{equation}
   \Delta _i^{\rm{T}} = \frac{{H - {\ell _0} - {\ell _i} - {r_{\rm{V}}}}}{{{{\hat u}_i}}},i \in \left\{ {1, \ldots ,\left| {{\mathcal{V}_{\rm{R}}}} \right|} \right\}.
\end{equation}
Moreover, the duration of relay ${v_i}$ within the coverage of vehicle $v_0$ can be predicted by 
\begin{equation}
     D_i^{\rm{T}} = \frac{{2{r_{\rm{V}}}}}{{{{\hat u}_i}}},i \in \left\{ {1, \ldots ,\left| {{\mathcal{V}_{\rm{R}}}} \right|} \right\}
\end{equation}
Based on the above predictable parameters,  the cumulative data amount transmitted via each V2I link or V2V link (without considering the existence of other vehicles communicating), can be calculated by $ C_i^{\rm{I}} = {R_{\rm{I}}}D_i^{\rm{I}},\left( {i \in \left\{ {0,1, \ldots ,\left| {{\mathcal{V}_{\rm{R}}}} \right|} \right\}} \right)$ and $ C_i^{\rm{V}} = {R_{\rm{V}}}D_{{i}}^{\rm{T}},\left( {i \in \left\{ {1, \ldots ,\left| {{\mathcal{V}_{\rm{R}}}} \right|} \right\}} \right)$, respectively. {\color{Blues}Meanwhile, considering the mutual independence of large- and small-scale channel gain, the average overall energy consumption for a link can be calculated by 
\begin{equation}
\begin{aligned}
    f_i^x\left( {{t^{\rm{S}}},{t^{\rm{E}}}} \right) &= \int_{{t^{\rm{S}}}}^{{t^{\rm{E}}}} {\int_0^\infty  {\frac{{{\Upsilon _x}{\sigma ^2}}}{{{A_x}{G_x}\left( {d_i^{x}\left( t \right)} \right)\tilde g}}f\left( {\tilde g} \right)d\tilde gdt} }  \\ &= \int_{{t^{\rm{S}}}}^{{t^{\rm{E}}}} {{\mathbb E _{\tilde g}}\left[ {{p_x}\left( {d_i^x\left( t \right)} \right)} \right]} dt, \label{ppp}  
\end{aligned}
\end{equation}
where $t^\text{S}$ and $t^\text{E}$ represent the start and end time of a link, respectively.
By substituting \eqref{ap} into \eqref{ppp}, the final expression of \eqref{ppp} can be obtained, as shown in \eqref{P_V2I} and \eqref{P_V2V}. }
\begin{figure*}[bp]
\vspace{-0.5cm}
\centering
\hrulefill
\textcolor{Blues}{
   \begin{equation}
 f_i^{{\rm{V2I}}}\left( {{t^{\rm{S}}},{t^{\rm{E}}}} \right) = \left\{ {\begin{array}{*{20}{c}}
{\frac{{{\rm{M}}{\Upsilon _{\rm{I}}}{\sigma ^2}}}{{{A_{\rm{I}}}{b_{\rm{I}}}{{\bar u}_i}\left( {{a_{\rm{I}}} + 1} \right)}}\left( {{{\left( {{\ell _i} + {{\bar u}_i}{t^{\rm{E}}}} \right)}^{{a_{\rm{I}}} + 1}} - {{\left( {{\ell _i} + {{\bar u}_i}{t^{\rm{S}}}} \right)}^{{a_{\rm{I}}} + 1}}} \right),}&{\ell  \ge 0,0 \le {t^{\rm{S}}} \le {t^{\rm{E}}} \le D_i^{\rm{I}}}\\
{\frac{{{\rm{M}}{\Upsilon _{\rm{I}}}{\sigma ^2}}}{{{A_{\rm{I}}}{b_{\rm{I}}}{{\bar u}_i}\left( {{a_{\rm{I}}} + 1} \right)}}\left( {{{\left( { - {\ell _i} - {{\bar u}_i}{t^{\rm{S}}}} \right)}^{{a_{\rm{I}}} + 1}} - {{\left( { - {\ell _i} - {{\bar u}_i}{t^{\rm{E}}}} \right)}^{{a_{\rm{I}}} + 1}}} \right),}&{\ell  < 0,0 \le {t^{\rm{S}}} \le {t^{\rm{E}}} \le  - \frac{{{\ell _i}}}{{{{\bar u}_i}}}}\\
{\frac{{{\rm{M}}{\Upsilon _{\rm{I}}}{\sigma ^2}}}{{{A_{\rm{I}}}{b_{\rm{I}}}{{\bar u}_i}\left( {{a_{\rm{I}}} + 1} \right)}}\left( {{{\left( { - {\ell _i} - {{\bar u}_i}{t^{\rm{S}}}} \right)}^{{a_{\rm{I}}} + 1}} + {{\left( {{\ell _i} + {{\bar u}_i}{t^{\rm{E}}}} \right)}^{{a_{\rm{I}}} + 1}}} \right),}&{\ell  < 0,0 \le {t^{\rm{S}}} - \frac{{{\ell _i}}}{{{{\bar u}_i}}} \le {t^{\rm{E}}} \le D_i^{\rm{I}}}\\
{\frac{{{\rm{M}}{\Upsilon _{\rm{I}}}{\sigma ^2}}}{{{A_{\rm{I}}}{b_{\rm{I}}}{{\bar u}_i}\left( {{a_{\rm{I}}} + 1} \right)}}\left( {{{\left( {{\ell _i} + {{\bar u}_i}{t^{\rm{E}}}} \right)}^{{a_{\rm{I}}} + 1}} - {{\left( {{\ell _i} + {{\bar u}_i}{t^{\rm{S}}}} \right)}^{{a_{\rm{I}}} + 1}}} \right),}&{\ell  < 0, - \frac{{{\ell _i}}}{{{{\bar u}_i}}} \le {t^{\rm{S}}} \le {t^{\rm{E}}} \le D_i^{\rm{I}}}
\end{array}} \right.\label{P_V2I}
\end{equation}
\begin{equation}
 f_i^{{\rm{V2V}}}\left( {{t^{\rm{S}}},{t^{\rm{E}}}} \right) = \left\{ {\begin{array}{*{20}{l}}
{\frac{{{\rm{M}}{\Upsilon _{\rm{I}}}{\sigma ^2}}}{{{A_{\rm{I}}}{b_{\rm{V}}}{{\hat u}_i}\left( {{a_{\rm{V}}} + 1} \right)}}\left( {{{\left( {{r_{\rm{V}}} - {{\hat u}_i}\left( {{t^{\rm{S}}} - \Delta _i^{\rm{T}}} \right)} \right)}^{{a_{\rm{V}}} + 1}} - {{\left( {{r_{\rm{V}}} - {{\hat u}_i}\left( {{t^{\rm{E}}} - \Delta _i^{\rm{T}}} \right)} \right)}^{{a_{\rm{V}}} + 1}}} \right),}&{0 \le {t^{\rm{S}}} \le {t^{\rm{E}}} \le \frac{{D_i^{\rm{T}}}}{2}}\\
{\frac{{{\rm{M}}{\Upsilon _{\rm{I}}}{\sigma ^2}}}{{{A_{\rm{I}}}{b_{\rm{V}}}{{\hat u}_i}\left( {{a_{\rm{V}}} + 1} \right)}}\left( {{{\left( {{r_{\rm{V}}} - {{\hat u}_i}\left( {{t^{\rm{S}}} - \Delta _i^{\rm{T}}} \right)} \right)}^{{a_{\rm{V}}} + 1}} + {{\left( {{{\hat u}_i}\left( {{t^{\rm{E}}} - \Delta _i^{\rm{T}}} \right) - {r_{\rm{V}}}} \right)}^{{a_{\rm{V}}} + 1}}} \right),}&{0 \le {t^{\rm{S}}} \le \frac{{D_i^{\rm{T}}}}{2} \le {t^{\rm{E}}} \le D_i^{\rm{T}}}\\
{\frac{{{\rm{M}}{\Upsilon _{\rm{I}}}{\sigma ^2}}}{{{A_{\rm{I}}}{b_{\rm{V}}}{{\hat u}_i}\left( {{a_{\rm{V}}} + 1} \right)}}\left( {{{\left( {{{\hat u}_i}\left( {{t^{\rm{E}}} - \Delta _i^{\rm{T}}} \right) - {r_{\rm{V}}}} \right)}^{{a_{\rm{V}}} + 1}} - {{\left( {{{\hat u}_i}\left( {{t^{\rm{S}}} - \Delta _i^{\rm{T}}} \right) - {r_{\rm{V}}}} \right)}^{{a_{\rm{V}}} + 1}}} \right),}&{\frac{{D_i^{\rm{T}}}}{2} \le {t^{\rm{S}}} \le {t^{\rm{E}}} \le D_i^{\rm{T}}}
\end{array}} \right.    \label{P_V2V} 
\end{equation}}

\end{figure*}

\subsection{{Achievable Throughput Analysis}}
\label{ATA}

As this subsection is to derive the achievable throughput within $T_{\text max}$, the issue of energy saving is not considered here. With the consideration of $R_{\text{I}}>R_\text{V}$, we let the target vehicle maintain the V2I link until it leaves the coverage of RSU A. Moreover, since all the V2V links can provide the same average transmission rate $R_\text{V}$, we transform the problem of deriving the maximum achievable throughput into the problem of deriving the maximum total duration of the V2V links established sequentially between the target vehicle and the relays.

As mentioned in Section~\ref{Description}, we number the relays in the sequence of encounters with the target vehicle. In this context, we have $\Delta _1^{\text{T}} \leqslant \Delta _2^{\text{T}} \leqslant  \ldots  \leqslant \Delta _{\left| {{\mathcal{V}_{\text{R}}}} \right|}^{\text{T}}$.
We characterize a communication link by an interval whose two endpoints represent the start time and the end time  of the link. For example, without considering the effect of other existing communication links, an available V2V link  of relay $v_i$ can represented by $\left[ {\Delta _i^{\text{T}},\Delta _i^{\text{T}} + D_i^{\text{T}}} \right]$.
However, for a specific cooperative transmission, the start time and end time of of each V2V link become less straightforward. To avoid the interference between V2V links,  relay $v_i$ can only communicate to  vehicle $v_0$ after the V2V link between vehicle $v_0$ and relay $v_{i-1}$ is broken. This means that the start time of the V2V link established by vehicle $v_i$ may be later than $\Delta_i^{\text T}$.
Meanwhile,  relay $v_i$ is responsible for forwarding the pre-stored SUs to vehicle $v_0$. Thus, the maximum cumulatively transmitted data amount (denoted by $\hat C_i^{{\text{I}}}$) of the previously maintained V2I link restricts the maximum data amount transmitted over the V2V link. This requires a more elaborate calculation for the end time of the V2V link.

We denote the time when relay $v_i$ starts to forward data to vehicle $v_0$ in a cooperative transmission achieving the maximum throughput by $\hat \delta _i$. Meanwhile,  the set of the total V2V links cumulatively established by vehicle $v_0$ when  it encounters relay $v_i$ is denoted by ${\mathcal{D}_i}$.  For example, for relay $v_1$, the V2V link needs be established after the target vehicle leaves the coverage area of RSU A. Therefore, we have
\begin{equation}
\hat \delta _1{\text{  =  max}}\left\{ {\Delta _1^{\text{T}},0 + D_{\text{0}}^{\text{I}}} \right\}. 
\end{equation}
Since the end time of a V2V link is restricted by two constraints: the maximum data amount pre-stored via the V2I link, and the time to leave the communication range of  the target vehicle,  the V2V link established by relay $v_1$, i.e., ${\mathcal{D}_1}$, is expressed by
\begin{equation}
    {\mathcal{D}_1} = \left[ {\hat \delta _1,\min \left( {\hat \delta _1 + {{\hat C_1^{\text{I}}} \mathord{\left/
 {\vphantom {{\hat C_i^{\text{I}}} {{R_{\text{V}}}}}} \right.
 \kern-\nulldelimiterspace} {{R_{\text{V}}}}},\Delta _1^{\text{T}} + D_{\text{1}}^{\text{T}}} \right)} \right], 
\end{equation}
{\color{Blues}where $\hat C_1^{\rm{I}} = {R_{\rm{I}}} \cdot \min \left\{ {D_1^{\rm{I}},\left| {\left[ {0,{{\hat \delta }_1}} \right]} \right|} \right\}$, since the pre-store process for the relay vehicle needs to be completed before forwarding data to the target vehicle.}
Moreover, as shown in Fig.~\ref{max_V2V_duration}, for  subsequent relay $v_i$, $i \in \left\{ {2, \ldots ,\left| {{{\cal V}_{\text{R}}}} \right|} \right\}$, the start time of the V2V link should be after the end time of the previous V2V link established by relay $v_{i-1}$. Therefore, the start time of the subsequent V2V links is expressed by  
\begin{equation}
    \hat \delta _i = \Delta _i^{\text{T}} + \left| {{\mathcal{D}_{i - 1}}\bigcap {\left[ {\Delta _i^{\text{T}},\Delta _i^{\text{T}} + D_i^{\text{T}}} \right]} } \right|, i \in \left\{ {2, \ldots ,\left| {{\mathcal{V}_{\text{R}}}} \right|} \right\},  
\end{equation}
where $\left|  \cdot  \right|$ represents the interval length, i.e., the link duration. Similar to the derivation of ${\mathcal{D}_1}$, {\color{Blues}we have $\hat C_i^{\rm{I}} = {R_{\rm{I}}} \cdot \min \left\{ {D_i^{\rm{I}},\left| {\left[ {0,{{\hat \delta }_i}} \right]} \right|} \right\}$.} Then, the set of the total V2V links cumulatively established when vehicle $v_0$ encountering $v_i$, $i \in \left\{ {2, \ldots ,\left| {{\mathcal{V}_{\text{R}}}} \right|} \right\}$, is represented by
\begin{equation}
   {\mathcal{D}_i} = \left[ {\hat \delta _i,\min \left( {\hat \delta _i + {{\hat C_i^{\text{I}}} \mathord{\left/
 {\vphantom {{\hat C_i^{\text{I}}} {{R_{\text{V}}}}}} \right.
 \kern-\nulldelimiterspace} {{R_{\text{V}}}}},\Delta _i^{\text{T}} + D_i^{\text{T}}} \right)} \right] \cup {\mathcal{D}_{i - 1}}. \label{V2V}
\end{equation}

According to~\eqref{V2V}, we can obtain the maximum total duration of the V2V links, i.e., ${\mathcal{D}_{\left| {{\mathcal{V}_{\text{R}}}} \right|}}$.
Furthermore, considering the delay requirement, the total duration of the V2V links is further modified to ${\mathcal{D}_{\left| {{\mathcal{V}_{\text{R}}}} \right|}} \cap \left[ {0,{T_{\max }}} \right]$. Therefore, by jointly considering the direct V2I link between vehicle $v_0$ and RSU~A, the achievable throughput within $T_{\text max}$ can be expressed by 
\begin{equation}
{Q_{\max }} = {R_{\text{I}}}D_0^{\text{I}} + {R_{\text{V}}}\left| {{\mathcal{D}_{\left| {{\mathcal{V}_{\text R}}} \right|}}\bigcap {\left[ {0,{d^{\max }}} \right] - \left[ {0,D_0^{\text{I}}} \right]} } \right|. \label{Qmax}
\end{equation}

\subsection{{Problem Formulation for Relay Selection \& SU Assignment}}
\label{OPF}
According to $Q_\text{max}$ obtained in Section~\ref{ATA}, the semantic encoder extracts an  appropriate SR.
We assume that the SR consists of $N$ SUs, i.e., $j\in \left\{ {1,2, \ldots ,N} \right\}$. 
{\color{Blues}Considering that
 the direct link relies on the least predictive parameters and its start and end time is independent of other V2V links, the sudden change in vehicle speed have minimal impact on its transmission integrity. In this sense, the SUs with high importance need to prioritize the direct link, which ultimately determine the amount of total data to be transmitted via forwarding links. Therefore, relay selection and SU allocation are two mutually coupled problems, which are jointly characterized by a  (${\left| {{\mathcal{V}_{\text{R}}}} \right| + 1}$)-row and $N$-column matrix ${{\mathbf{\Phi }}} = ({\phi _{i,j}}:i \in \left\{ {0,1,2, \ldots ,\left| {{\mathcal{V}_{\text R}}} \right|} \right\},j \in \left\{ {1,2, \ldots ,N} \right\})$. } Herein, ${\phi _{i,j}}$ is a binary
indicator, with ${\phi _{i,j}}=1$ meaning that SU $j$ is transmitted to vehicle $i$ via the direct link or the relay link, and  ${\phi _{i,j}}=0$ otherwise. If  $\sum\nolimits_{j = 1}^N {{\phi _{i,j}}}  = 0$, it means that vehicle ${v_i} \in {\mathcal{V}_{\text{R}}}$ is not selected as a  relay under ${\mathbf{\Phi }}$.
Before defining the optimal strategy ${\mathbf{\Phi }}^*$, we first analyze the constraints that a feasible policy needs to satisfy as follows.

For a certain ${\mathbf{\Phi }}$, the start and the end time of each links are deterministic. As stated in Section~\ref{sec:P}, we assume that all the V2I links are established at the initial moment in the theoretical study of this paper. Thus, the end time of the V2I links are only determined by the SUs assigned to the each vehicle. As such, the end time of the V2I link of vehicle ${v_i} \in \mathcal{V}$ is expressed by $t_i^{{{\text{E}}_{\text{I}}}} = \frac{{\sum\nolimits_{j = 1}^N {{\phi _{i,j}}{\beta _j}} }}{{{R_{\text{I}}}}}$. Moreover, we denote the start time of the V2V link established by vehicle $v_i$ by ${t_i^{\text{S}_{\text{V}}}}$, ($\Delta _i^{\text{T}} \leqslant {t_i^{{{\text{S}}_{\text{V}}}}} \leqslant \Delta _i^{\text{T}} + D_i^{\text{T}}$).  For vehicle $v_1$, the start time of the V2V link should be after the end time of the V2I link between the target vehicle $v_0$ and RSU A.  Thus, the expression of $t_1^{{{\text{S}}_{\text{V}}}}$ is shown as
\begin{equation}
   t_1^{{{\text{S}}_{\text{V}}}} = \min \left\{ {\max \left\{ {\Delta _1^{\text{T}},t_0^{{{\text{E}}_{\text{I}}}}} \right\},\Delta _1^{\text{T}} + D_1^{\text{T}}} \right\}.  
\end{equation}
Similarly, all the subsequent V2V links should start after all their previous links are broken. We denote the end time of the V2V link established by vehicle ${v_i} \in {\mathcal{V}_{\text{R}}}$ by $t_i^{\text{E}_{\text{V}}}$, which is calculated by $t_i^{\text{E}_\text{V}} = t_{i}^{{{\text{S}}_{\text{V}}}}+ \frac{{\sum\nolimits_{j = 1}^N {{\phi _{i,j}}{\beta _j}} }}{{{R_{\text{V}}}}}$. Therefore, the start time of the subsequent V2V links is expressed by 
\begin{equation}
   t_i^{{{\text{S}}_{\text{V}}}} = \min \left\{ {\max \left\{ {\Delta _i^{\text{T}},t_{i - 1}^{{{\text{E}}_{\text{V}}}}} \right\},\Delta _i^{\text{T}} + D_i^{\text{T}}} \right\},\forall i \in \left\{ {2, \ldots ,\left| {{\mathcal{V}_{\text{R}}}} \right|} \right\}.  
\end{equation}
After determining the start time, the maximum data amount can be transmitted a V2V link can by calculated by $\hat{C}_i^{\text{V}}={{R_{\text{V}}}\left( {\Delta _i^{\text{T}} + D_i^{\text{T}} - {t_i^{{{\text{S}}_{\text{V}}}}}} \right)}$, {\color{Blues}and the  maximum pre-stored data amount via the V2I link can be calculated by $\hat C_i^{\rm{I}} = {R_{\rm{I}}} \cdot \min \left\{ {D_i^{\rm{I}},\left| {\left[ {0,t_i^{{{\rm{S}}_{\rm{V}}}}} \right]} \right|} \right\}$}. Since the total data amount of the SUs assigned is bounded by both the transmission capacity of the V2V link and V2I link, to ensure the integrity of the transmission, we have 
\begin{equation}
    \sum\nolimits_{j = 1}^N {{\phi _{i,j}}{\beta _j}}  \leqslant \min \left\{ {\hat C_i^{\text{I}},\hat C_i^{\text{V}}} \right\},{\text{ }}\forall i \in \left\{ {1, \ldots ,\left| {{\mathcal{V}_{\text R}}} \right|} \right\}. \label{C2}
\end{equation}
For the same reason,  the transmission of the SUs assigned to vehicle $v_0$ is required to be completed within the coverage of RSU A. Thus, we have 
\begin{equation}
   \sum\nolimits_{j = 1}^N {{\phi _{0,j}}{\beta _j}}  \leqslant \hat C_0^{\text{I}}. \label{C1}
\end{equation}
Additionally, considering the delay requirement,  the last V2V link should end at a time earlier than the maximum acceptable delay threshold. Therefore, we have
\begin{equation}
   \max \left\{ {t_{i}^{{{\text{E}}_{\text{V}}}}} \right\} \leqslant {T_{\max }},{\text{ }}\forall i \in \left\{ {1, \ldots ,\left| {{\mathcal{V}_{\text{R}}}} \right|} \right\}. 
\end{equation}

To evaluate feasible strategies that satisfy the above constraints, we consider two main aspects. One is the energy consumption.
According to \eqref{P_V2I} and \eqref{P_V2V}, for any feasible strategy ${\mathbf{\Phi }}$, the total energy consumption of the V2I links and V2V links can be calculated by ${P_{{\text{V2I}}}} = \sum\nolimits_{i = 0}^{\left| {{\mathcal{V}_{\text{R}}}} \right|} {f_i^{{\text{V2I}}}\left( {0,t_i^{{{\text{E}}_{\text{I}}}}} \right)} $ and ${P_{{\text{V2V}}}} = \sum\nolimits_{i = 1}^{\left| {{\mathcal{V}_{\text{R}}}} \right|} {f_i^{{\text{V2V}}}( {t_i^{{{\text{S}}_{\text{V}}}},t_i^{{{\text{E}}_{\text{V}}}}} )}$, respectively. 
The other is  semantic reliability. 
Considering the possibility of sudden changes in vehicle speed, the selected SR might fail to be fully transmitted as planned. 
 Therefore, the SUs with high importance can assign to the more reliable direct link to reduce the impact of vehicle network uncertainty as discussed at the beginning of this subsection.
With this in mind,  we introduce two parameters ${ \theta _{\text{T}}}$ and ${\theta _{\text{R}}}$ to qualitatively characterize the reliability of the direct and forward transmission, respectively, where ${\theta _{\text{T}}} > {\theta _{\text{R}}}$. Furthermore, we define a new metric to quantify  the semantic reliability of ${\mathbf{\Phi }}$ based on semantic significance ${\alpha _j}$ of each SU $j$, which is expressed by  
\begin{equation}
 \Theta {\text{ = }}{\theta _{\text{T}}}\sum\nolimits_{j = 1}^N {{\phi _{0,j}}{\alpha _j}}  + {\theta _{\text{R}}}\sum\nolimits_{i = 1}^{\left| {{\mathcal{V}_{\text{R}}}} \right|} {\sum\nolimits_{j = 1}^N {{\phi _{i,j}}{\alpha _j}}}. \label{DDD}   
\end{equation}

In summary, the relay selection and SU assignment can be jointly formulated as a combinatorial optimization problem,
\begin{equation}
    \mathop {\min }\limits_{\mathbf{\Phi }} \kappa_1(P_{{\text{V2I}}}  + P_{{\text{V2V}}}) - \kappa_2 \Theta, \label{15}\tag{P1}
\end{equation}
subject to 
\begin{equation}
  {\phi _{i,j}} \in \left\{ {0,1} \right\},{\text{                          }}\forall i \in \left\{ {0, \ldots ,\left| {{\mathcal{V}_\text{R}}} \right|} \right\},\forall j \in \left\{ {1, \ldots ,N} \right\},\tag{a}\label{15a}
\end{equation}
\begin{equation}
    \sum\nolimits_{i = 0}^{\left| {{\mathcal{V}_\text{R}}} \right|} {{\phi _{i,j}}}  = 1,{\text{                                                     }}\forall j \in \left\{ {1,2, \ldots ,N} \right\},\tag{b}\label{15b}
\end{equation}
\begin{equation}
    {\text{Constraints: (21)\quad (22)\quad (23)}}, \nonumber
\end{equation}

\noindent where  $\kappa_1$ and $\kappa_2$ are two  parameters used to weigh {\color{Blues} energy consumption and semantic transmission reliability}. Moreover, the constraints in \eqref{15a} and \eqref{15b} ensure that each SU is assigned only once. The specific solution to \eqref{15} is provided in Section~\ref{MAS}.

\subsection{Markov Approximation and Solution}
\label{MAS}
Given the multiple $\max \left\{  \cdot  \right\}$ and $\min \left\{  \cdot  \right\}$ in~\eqref{15}, the explicit expression for its feasible region is challenging to derive. Also, due to the high dimensionality of ${\mathbf{\Phi }}$, the conventional numerical analysis methods and centralized search algorithms become inefficient, {\color{Blues}especially for finding a favourable solution within limited time}. 

To this end, we propose a Markov-chain-guided multi-thread search algorithm (M-MTSA) as shown in Fig.~\ref{algorithm2}. 
Inspired by {Markov approximation}~\cite{chen2013markov}, we first approximate \eqref{15} by transforming it into a continuous convex optimization problem \eqref{17} in the probability domain based on Log-Sum-Exp approximation and the conjugate function property. The decision variables in \eqref{17} are the probability weights corresponding to all possible  ${\mathbf{\Phi }}$. Ideally, the probability corresponding to the optimal strategy ${\mathbf{\Phi }}^*$ is remarkably close to one. Then, we construct a Markov chain with the state space as all possible ${\mathbf{\Phi }}$ and the stationary distribution as the optimal solution of \eqref{17}.  During its execution, according to constraint~\eqref{15b}, M-MTSA maintains $N$ threads for all the SUs, respectively.  The partial strategy for thread $j$ serves as the $j$th column of ${\mathbf{\Phi }}$, a \textit{one-hot} vector,  which determines the selected relay for SU $j$. According to the transition rates of the Markov chain,  the individual relays for the SUs are constantly and distributively updated  with  small inter-thread message passing overhead. By the careful design of transition rates, the Markov chain jumps to better strategies over time.
Next, we detail the problem transformation, Markov chain construction, and M-MTSA design in Sections~\ref{sec:transformation}-\ref{Design}, respectively.

\subsubsection{Problem Transformation}
\label{sec:transformation}
Recall the problem in \eqref{15}, for ease of presentation, we denote the objective function by  $U\left( {\mathbf{\Phi }} \right)$. Then, \eqref{15} can be rewritten as
\begin{equation}
    \mathop {\min  }\limits_{{\mathbf{\Phi }} \in \mathcal{F^*}} U\left( {\mathbf{\Phi }} \right),\label{16} 
\end{equation}
where $\mathcal{F^*}$ represents the feasible region. Since $\mathcal{F^*}$ is unavailable, we transform constrained problem to unconstrained one by adding a penalty term to the objective function. Then, \eqref{15} can be rewritten as
\begin{equation}
    \mathop {\min }\limits_{{\mathbf{\Phi }} \in \mathcal{F}} U\left( {\mathbf{\Phi }} \right) + \Omega  \cdot {{\mathbf{1}}_{{\complement _\mathcal{F}}{\mathcal{F}^*}}}\left( {\mathbf{\Phi }} \right).\label{17}\tag{P2}
\end{equation}
In \eqref{17}, $\mathcal{F}$ is the set of all the possible ${\mathbf{\Phi }}$ satisfying constraints \eqref{15a} and \eqref{15b} with $\left| \mathcal{F} \right|= {( {\left| {{\mathcal{V}_{\text{R}}}} \right| + 1} )^N}$. Moreover,  $\Omega$ is a constant penalty factor which is significantly larger than $U\left( {\mathbf{\Phi }} \right)$, and ${{\mathbf{1}}_{{\complement _\mathcal{F}}{\mathcal{F}^*}}}\left( {\mathbf{\Phi }} \right)$ is an indicator function defined as
\begin{equation}
    {{\mathbf{1}}_{{\complement _\mathcal{F}}{\mathcal{F}^*}}}\left( {\mathbf{\Phi }} \right) = \left\{ {\begin{array}{*{20}{c}}
  {1,}&{{\mathbf{\Phi }} \in {\complement _\mathcal{F}}{\mathcal{F}^*}} \\ 
  {0,}&{{\text{otherwise}}} 
\end{array}} \right..  
\end{equation}
For brevity, we rewrite the objective function in \eqref{17} as $\hat U\left( {\mathbf{\Phi }} \right)$. To enable the analysis from the probability domain, the log-sum-exp function is used to approximate ${\min _{{\mathbf{\Phi }} \in \mathcal{F}}}\hat U\left( {\mathbf{\Phi }} \right)$, i.e.,
\begin{equation}
  \mathop {\min }\limits_{{\mathbf{\Phi }} \in \mathcal{F}} \hat U\left( {\mathbf{\Phi }} \right) \approx   {g_\varpi }( \hat{ \mathcal{U}} ) =  - \varpi \log \left( {\sum\limits_{{\mathbf{\Phi }} \in \mathcal{F}} {\exp \left( { - \frac{{\hat U\left( {\mathbf{\Phi }} \right)}}{\varpi }} \right)} } \right),    
\end{equation}
with the upper bound of $\varpi \log \left| \hat{ \mathcal{U}} \right|$ for approximation gap.
\begin{lemma}
\label{L1}
When $\varpi  \to {0^ + }$, for a set $\mathcal{X}$ of $n$ nonnegative real variables $x_1$, $x_2$, $x_3$, ..., $x_n$, we have  
\begin{equation}
   \mathop {\min }\limits_{i = 1,2, \ldots ,n} {x_i} - \varpi \log \left| \mathcal{X} \right| \leqslant {g_\varpi }\left( \mathcal{X} \right) \leqslant \mathop {\min }\limits_{i = 1,2, \ldots ,n} {x_i}.  
\end{equation}
\end{lemma}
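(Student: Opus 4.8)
The plan is to establish the two inequalities separately by elementary estimates, using only the monotonicity of the logarithm together with the sign of $\varpi$. Write $x_{\min} = \min_{i=1,\dots,n} x_i$, recall that $g_\varpi(\mathcal{X}) = -\varpi \log\left( \sum_{i=1}^n \exp(-x_i/\varpi) \right)$, and note $|\mathcal{X}| = n$. The key structural fact to keep in mind throughout is that, since $\varpi > 0$, both $t \mapsto \exp(-t/\varpi)$ and $s \mapsto -\varpi\log s$ are strictly decreasing, so bounds on the exponential sum will translate into \emph{reversed} bounds on $g_\varpi$.

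For the upper bound, I would observe that every summand $\exp(-x_i/\varpi)$ is positive, hence the sum is at least its largest term; and the largest term is $\exp(-x_{\min}/\varpi)$, because $t \mapsto \exp(-t/\varpi)$ is decreasing. Thus $\sum_{i=1}^n \exp(-x_i/\varpi) \geqslant \exp(-x_{\min}/\varpi)$. Applying the order-reversing map $s \mapsto -\varpi\log s$ to both sides gives $g_\varpi(\mathcal{X}) \leqslant -\varpi\log\!\exp(-x_{\min}/\varpi) = x_{\min}$, which is the right-hand inequality.

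For the lower bound, the dual estimate is that each summand satisfies $\exp(-x_i/\varpi) \leqslant \exp(-x_{\min}/\varpi)$, so $\sum_{i=1}^n \exp(-x_i/\varpi) \leqslant n\,\exp(-x_{\min}/\varpi)$. Applying $s \mapsto -\varpi\log s$ once more yields $g_\varpi(\mathcal{X}) \geqslant -\varpi\log\!\left( n\,\exp(-x_{\min}/\varpi) \right) = x_{\min} - \varpi\log n = x_{\min} - \varpi\log|\mathcal{X}|$, which is the left-hand inequality. Combining the two bounds gives the stated sandwich, and since $\varpi\log|\mathcal{X}| \to 0$ as $\varpi \to 0^+$, it follows that $g_\varpi(\mathcal{X}) \to x_{\min}$, justifying the use of the log-sum-exp term as a surrogate for the minimum in \eqref{17}.

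There is essentially no genuine obstacle here; the proof is a two-line squeeze. The only point that warrants a moment's attention is bookkeeping the direction of the inequalities — one must use that $\varpi > 0$ to know that $-\varpi\log(\cdot)$ reverses order, so that the trivial bounds $\exp(-x_{\min}/\varpi) \leqslant \sum_i \exp(-x_i/\varpi) \leqslant n\exp(-x_{\min}/\varpi)$ invert into the claimed bounds on $g_\varpi$. It is worth remarking that the nonnegativity of the $x_i$ stated in the hypothesis is not actually needed for these inequalities; in the application it merely guarantees $\hat U(\mathbf{\Phi}) \geqslant 0$, which keeps the approximation-gap bound $\varpi\log|\mathcal{X}|$ meaningful.
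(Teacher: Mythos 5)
Your proof is correct and is essentially the same squeeze argument as the paper's: the paper factors $\exp(-x_{\min}/\varpi)$ out of the sum and bounds the remaining factor between $1$ and $n$, which is exactly your bound $\exp(-x_{\min}/\varpi)\leqslant\sum_i\exp(-x_i/\varpi)\leqslant n\exp(-x_{\min}/\varpi)$ followed by the order-reversing map $s\mapsto-\varpi\log s$. Your side remark that nonnegativity of the $x_i$ is not needed for the inequalities is also accurate.
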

\begin{proof}
We  rearrange $x_i$ so that they are ranked as ${x_1} \leqslant {x_2} \leqslant  \ldots  \leqslant {x_n}$. Then, we have 
\begin{equation}
\begin{aligned}
  {g_\varpi }\left( \mathcal{X} \right) &=- \varpi \log \left( {\sum\nolimits_{i = 1}^n {\exp \left( { - \frac{{{x_i}}}{\varpi }} \right)} } \right) \\
   &=  - \varpi \log \left( {\exp \left( { - \frac{{{x_1}}}{\varpi }} \right)\exp \left( {\frac{{{x_1}}}{\varpi }} \right)\sum\nolimits_{i = 2}^n {\exp \left( { - \frac{{{x_i}}}{\varpi }} \right)} } \right) \\
   &={x_1} - \varpi \log \left( {1 + \sum\nolimits_{i = 2}^n {\exp \left( {\frac{{{x_1} - {x_i}}}{\varpi }} \right)} } \right).    
\end{aligned}
\end{equation}
Therefore, the approximation gap can be expresses by
\begin{equation}
    \left| {{g_\varpi }\left( \mathcal{X} \right) - {x_1}} \right| = \left| {\varpi \log \left( {1 + \sum\limits_{i = 2}^n {\exp \left( {\frac{{{x_1} - {x_i}}}{\varpi }} \right)} } \right)} \right|.  
\end{equation}
When ${x_1} = {x_2} =  \ldots  = {x_n}$, $ \left| {{g_\varpi }\left( \mathcal{X} \right) - {x_1}} \right|= \varpi \log \left| \mathcal{F} \right|$; when ${x_1} \ll {x_2} \leqslant  \ldots  \leqslant {x_n}$, $\left| {{g_\varpi }\left( \mathcal{X} \right) - {x_1}} \right| \to 0$.
\end{proof}
Then, since ${g_\varpi }( {\hat{\mathcal{U}}} )$ is a convex and closed function, the conjugate of its conjugate is itself, i.e., ${g_\varpi }( \hat{\mathcal{U}}) = g_\varpi ^{**}( {\hat{\mathcal{U}}})$. According to the definition of conjugate function\footnote{Let $g$: ${{\text{R}}^n} \to {\text{R}}$. The conjugate function of $g$ is defined as ${g^*}\left( y \right) = {\sup _{{\mathbf{x}} \in {\text dom} g}}\left( {{y^T}x - g\left( x \right)} \right)$~\cite{boyd2004convex}}, the conjugate of ${g_\varpi }( \hat{\mathcal{U}}) $ can be expressed by~\cite[p.93]{boyd2004convex}
\begin{equation}
    g_\varpi ^*\left( {\mathbf{p}} \right) = \left\{ {\begin{array}{*{20}{c}}
  { - \varpi \sum\nolimits_{{\mathbf{\Phi }} \in \mathcal{F}} {{p_{\mathbf{\Phi }}}\log {p_{\mathbf{\Phi }}}} ,}&{{\text{if }}{\mathbf{p}} \geqslant 0{\text{ and }}{1^T}{\mathbf{p}} = 1;} \\ 
  {\infty ,}&{{\text{otherwise}}{\text{.}}} 
\end{array}} \right.  
\end{equation}
Similarly, the conjugate of $ g_\varpi ^*\left( {\mathbf{p}} \right)$, i.e., $g_\varpi ^{**}( {\hat{\mathcal{U}}})$, can be obtained by solving the following problem~\cite{chen2013markov}.
\begin{equation}
    \begin{gathered}
  \mathop {\max}\limits_{{\mathbf{p}} \geqslant 0} \sum\limits_{{\mathbf{\Phi }} \in \mathcal{F}} {{p_{\mathbf{\Phi }}}} {\hat U}\left( {\mathbf{\Phi }} \right) + \varpi \sum\limits_{{\mathbf{\Phi }} \in \mathcal{F}} {{p_{\mathbf{\Phi }}}} \log {p_{\mathbf{\Phi }}}, \hfill \\
  {\text{s}}{\text{.t}}{\text{.  }}\sum\limits_{{\mathbf{\Phi }} \in \mathcal{F}} {{p_{\mathbf{\Phi }}} = 1.}  \hfill \\ 
\end{gathered} \label{24}\tag{P3}
\end{equation}
Therefore, the optimal value of \eqref{24} is the same as ${g_\varpi }( \hat{\mathcal{U}})$. According to Proposition~\ref{L1}, it approximates the optimal value of \eqref{17} with a gap bounded by $\varpi \log \left| \mathcal{F} \right|$, from the analysis of \eqref{24}, which is caused by the term $\varpi \sum\nolimits_{\bf{\Phi}  \in \mathcal{F}} {{p_{\bf{\Phi}} }\log {p_{\bf{\Phi}} }}$.
 By addressing the Karush–Kuhn–Tucker conditions~\cite{boyd2004convex},  the closed-form of the optimal solution to \eqref{24} is shown as below: 
\begin{equation}
    p_{\mathbf{\Phi }}^* = \frac{{\exp \left( { - \frac{{{\hat U}\left( {\mathbf{\Phi }} \right)}}{\varpi }} \right)}}{{\sum\nolimits_{{\mathbf{\Phi }}' \in \mathcal{F}} {\exp \left( { - \frac{{{\hat U}\left( {{\mathbf{\Phi }}'} \right)}}{\varpi }} \right)} }},\forall {\mathbf{\Phi }} \in \mathcal{F}.\label{25}
\end{equation}
As such, an average performance that is close to the optimal value of \eqref{17}  can be achieved via time-sharing of all the possible $\bf \Phi$ according to individual $p_{\bf \Phi} ^* $. Obviously, according to~\eqref{25}, ${{\mathbf{\Phi }}^*}$ occupies the longest proportion of time. The point to note here is that ${{\mathbf{\Phi }}^*}$ is what we try to find in our work, instead of the average performance itself.

\begin{figure}[t]
 \centering
 \includegraphics[scale = 0.4]{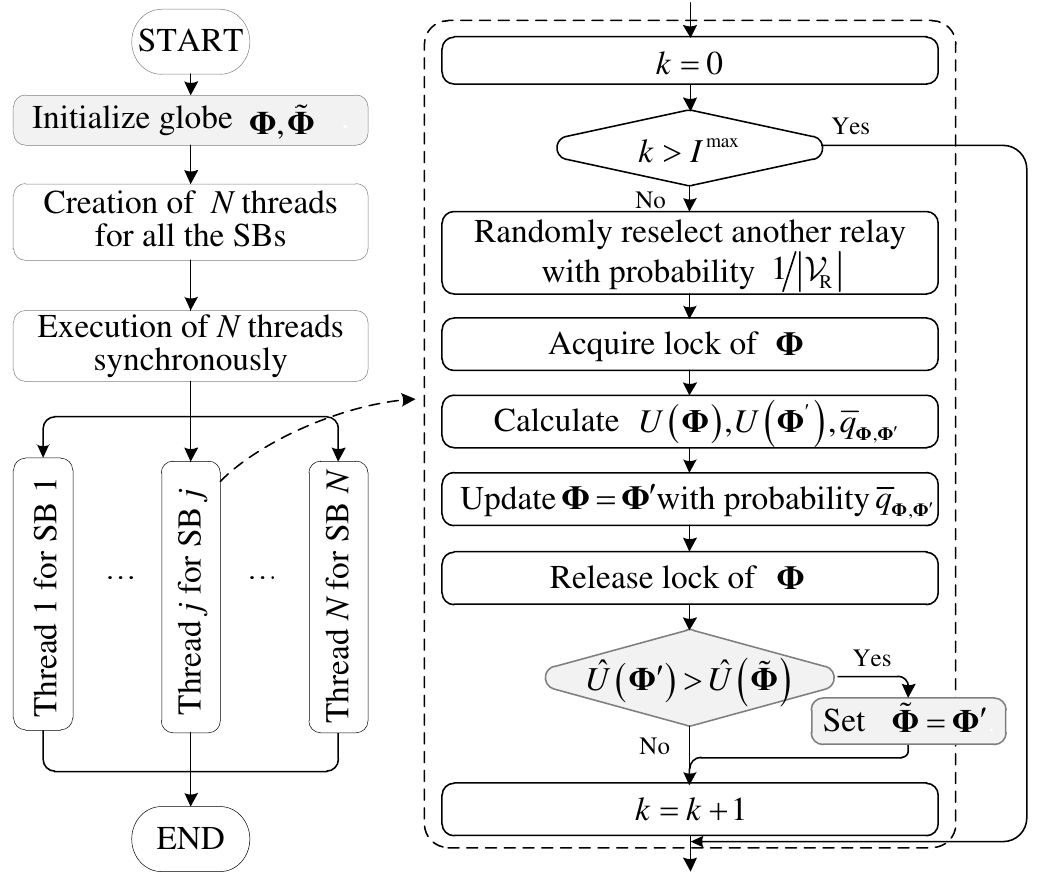}\\
 \caption{The flowchart of M-MTSA.}
 \label{algorithm2}

\end{figure}
\begin{algorithm}
\setstretch{0.3}
\footnotesize
\SetKwInOut{Input}{Input}
        \SetKwInOut{Output}{Output}
        \caption{ SU assignment algorithm based on~\cite{liu2022elastic} (\textbf{Baseline})}
        \Input{ ${{R_{{\text{V}}}}}, {{R_{{\text{I}}}}}, D_0^{\text{I}}, D_i^{\text{I}},\Delta _i^{\text{T}},D_i^{{\text{T}}},i \in \left\{ {1,2, \ldots ,\left| {{\mathcal{V}_R}} \right|} \right\}$}
        \Output{ $\mathbf{\Phi }$}

        Initialize ${\mathbf{\Phi }}{\text{ = }}{\mathbf{0}}$. \\
        Set ${\Phi _{0,j}} = 1,\forall j \in \left\{ {1,2, \ldots ,N} \right\}$ /* Assign all the SUs to the target vehicle */ \\
        \For {$i=0:{\left| {{\mathcal{V}_R}} \right|}-1$}{ 
        Check if constraint \eqref{C2} or \eqref{C1} is satisfied.\\
        \If{the related constraint is not satisfied}{ 
        \Do{ $\delta \Lambda  \leqslant 0$ }{
        Select an SU ${j^*} = \arg {\min _{j \in {\mathcal{S}_i}}}\left\{ {{\beta _j}} \right\}, \text{where\quad} {\mathcal{S}_i} = \left\{ {j\left| {{{\hat \Phi }_{i,j}} = 1,\forall j \in \left\{ {1,2, \ldots ,{N}} \right\}} \right.} \right\}$\\
        Set ${{\Phi }_{i,j^*}} = 0,{{\Phi }_{i+1,j^*}} = 1$ \\
        Calculate the remaining data amount that cannot be transmitted\\ $\delta \Lambda  = \sum\nolimits_{j = 1}^N {{\phi _{i,j}}{\beta _j}}  - \hat C_i^{{\text{V2I}}},i = 0$ \\ $\delta \Lambda  = \sum\nolimits_{j = 1}^N {{\phi _{i,j}}{\beta _j}}  - \min \left\{ {\hat C_i^{{\text{I}}},{R_{{\text{V}}}}\left( {\Delta _1^{\text{T}} + D_i^{{\text{T}}} - {{\hat \delta} _{i}}} \right)} \right\},i \in \left\{ {1,\ldots ,\left| {{\mathcal{V}_R}} \right|} \right\}$ \\
        }
        \eIf{i == 0}{
        ${{\hat \delta }_{1}} = \min \left\{ {\max \left\{ {\Delta _1^{\text{T}},\tfrac{{\sum\nolimits_{j = 1}^N {{\phi _{0,j}}{\beta _j}} }}{{{R_{{\text{I}}}}}}} \right\},{\Delta _1^{\text{T}}} + D_{1}^{{\text{T}}}} \right\}$}
{  ${{\hat \delta }_{i + 1}} = \min \left\{ {\max \left\{ {{{ \Delta}_{i + 1}^{\text{T}}},{{\hat \delta} _{i}} + \tfrac{{\sum\nolimits_{j = 1}^N {{\phi _{i,j}}{\beta _j}} }}{{{R_{{\text{V}}}}}}} \right\},{{ \Delta}_{i + 1}^{\text{T}}} + D_{i + 1}^{{\text{T}}}} \right\}$ }

        }

        }

\end{algorithm}
\subsubsection{Markov Chain Construction}
\label{Construction}
\label{MCD}
To proceed, Markov approximation implements a well-designed Markov chain with the state space of $\mathcal{F}$ to gradually  converge to the stationary distribution shown in~\eqref{25}. For any stationary distribution in product form,  there exists at least one continuous time-reversible ergodic Markov chain~\cite[Lemma 1]{chen2013markov}. Specifically, the transition rates need to meet the following two conditions:
\begin{itemize}
    \item the resulting Markov chain is irreducible, i.e., any two states are reachable from each other;
    \item The detailed balance equation is satisfied, i.e., $\forall {\mathbf{\Phi }},{\mathbf{\Phi }}' \in \mathcal{F}$, $p_{\mathbf{\Phi }}^*{q_{{\mathbf{\Phi }},{\mathbf{\Phi }}'}} = p_{{\mathbf{\Phi }}'}^*{q_{{\mathbf{\Phi }}',{\mathbf{\Phi }}}}$,
\end{itemize}
where ${q_{{\mathbf{\Phi }},{\mathbf{\Phi }}'}}$ be the transition rate from state ${\mathbf{\Phi }}$ to ${\mathbf{\Phi }}'$.
For faster convergence and easier capturing of  ${{\mathbf{\Phi }}^*}$, the Markov chain should be more likely to jump to the state with better performance.
As such, the transition rates should depend on both  $\hat U\left( {\mathbf{\Phi }} \right)$ for the current state and $\hat U\left( {{\mathbf{\Phi }}'} \right)$ for the target state.
With above in mind, the transition rate is designed as below:
\begin{equation}
{q_{{\mathbf{\Phi }},{{\mathbf{\Phi }}^\prime }}} = \frac{{\alpha \exp \left( { - \frac{{{\hat U}\left( {{{\mathbf{\Phi }}^\prime }} \right)}}{\varpi }} \right)}}{{\max \left\{ {\exp \left( { - \frac{{{\hat U}\left( {\mathbf{\Phi }} \right)}}{\varpi }} \right),\exp \left( { - \frac{{{\hat U}\left( {{{\mathbf{\Phi }}^\prime }} \right)}}{\varpi }} \right)} \right\}}},\label{26}    
\end{equation}
where $\alpha$ is a positive constant which  determines the convergence time of Markov chain. According to \eqref{26}, if ${\hat U\left( {{{\mathbf{\Phi }}^\prime }} \right) > \hat U\left( {\mathbf{\Phi }} \right)}$, the state is updated with maximum transition rate of $\alpha$. Otherwise, the larger difference between ${\hat U\left( {{{\mathbf{\Phi }}^\prime }} \right)}$ and ${\hat U\left( {\mathbf{\Phi }} \right)}$, the smaller the ${q_{{\mathbf{\Phi }},{\mathbf{\Phi '}}}}$. Moreover, the value of ${\hat U\left( {\mathbf{\Phi }} \right)}$, ${\mathbf{\Phi }} \in \mathcal{F}$, determines difference of the stationary distribution among the states, thus affecting the convergence time.
Specifically, the convergence time of the designed Markov chain is bounded as follows\footnote{The lower bound and upper bound are obtained based on spectral analysis and path coupling method, respectively. Due to space limitation, the proof process is omitted here. A similar process can be found in~\cite[Theorem 5]{chen2013markov}.}:

\noindent for $\varpi  \geqslant 2\left( {{{\hat U}_{\max }} - {{\hat U}_{\min }}} \right){\left( {\ln \left( {{{N + \tfrac{1}{{\left| {{\mathcal{V}_{\text{R}}}} \right|}}} \mathord{\left/
 {\vphantom {{N + \tfrac{1}{{\left| {{\mathcal{V}_{\text{R}}}} \right|}}} {N - 1}}} \right.
 \kern-\nulldelimiterspace} {N - 1}}} \right)} \right)^{ - 1}}$,
\begin{equation}
  {t_{{\text{mix}}}}\left( \epsilon  \right) \geqslant \frac{1}{{2\alpha M\left| {{\mathcal{V}_{\text{R}}}} \right|}}\ln \frac{1}{\epsilon }, \label{mix_1}  
\end{equation}
\begin{equation}
    {t_{{\text{mix}}}}\left( \epsilon  \right) \leqslant \frac{{\tfrac{1}{{\alpha \left| {{\mathcal{V}_{\text{R}}}} \right|}} \cdot \exp \left( {\tfrac{1}{\varpi }\left( {2{{\hat U}_{\max }} - {{\hat U}_{\min }}} \right)} \right)\ln \tfrac{N}{\epsilon }}}{{N + \tfrac{1}{{\left| {{\mathcal{V}_{\text{R}}}} \right|}} - \left( {N - 1} \right)\exp \left( {\tfrac{2}{\varpi }\left( {{{\hat U}_{\max }} - {{\hat U}_{\min }}} \right)} \right)}}, \label{mix_2} 
\end{equation}
where $\epsilon$ is the parameter to judge convergence, and ${{{\hat U}_{\max }}}$ and ${{{\hat U}_{\min }}}$ represent the maximum and minimum values of  $\hat U\left( {\mathbf{\Phi }} \right)$. According to \eqref{mix_1} and \eqref{mix_2}, we can observe that the larger the value of $\alpha$, the smaller the upper bound on the convergence time of the Markov chain. The value of $\alpha$ in our work is related to numbers of vehicles and SUs, which is specified in Section~\ref{Design}. Moreover, the differences in the value of $\hat U\left( {\bf{\Phi }} \right)$ corresponding to different ${\bf{\Phi }} \in {\cal F}$ and value of $\varpi$ also affect the convergence time.
\subsubsection{M-MTSA Design}
\label{Design}
\label{Solution}
M-MTSA is designed as shown in Fig.~\ref{algorithm2}.  M-MTSA is required to perform two functions. The one is to implement the designed Markov chain in a distributed manner. The other one is to  track the best solution during the Markov chain hopping process. For clarity, we use ${\mathbf{\Phi }}$, ${{\mathbf{\Phi '}}}$, and ${{\mathbf{\tilde \Phi }}}$ to represent the current state, the next state, and the current best strategy, respectively. It should be clarified that due to the stochastic nature of the mixed time of Markov chains, M-MTSA cannot ensure that the optimal result is obtained within $I$ iterations. In this sense, when the algorithm ends, we treat ${{{\mathbf{\tilde \Phi }}}}$ as ${{\mathbf{\Phi }}^*}$ approximately. {\color{Blues}With the aim to find a favorable solution within limited time,} we transform the continuous-time channel-hopping Markov chain to a 
discrete-time Markov chain via uniformization~\cite{ibe2013markov}. Specifically, all the threads randomly reselect another relay for their individual SUs with the probability of ${\tfrac{1}{{\left| {{\mathcal{V}_{\text{R}}}} \right|}}}$ in parallel. Then, one of the threads acquires the lock of ${\mathbf{\Phi }}$, and calculates ${\hat U\left( {\mathbf{\Phi }} \right)}$ and ${\hat U\left( {{{\mathbf{\Phi }}^\prime }} \right)}$. The state jumps from ${\mathbf{\Phi }}$ to ${{{\mathbf{\Phi }}^\prime }}$ with the probability of ${{\bar q}_{{\mathbf{\Phi }},{\mathbf{\Phi '}}}} = {{\exp \left( { - \frac{{\hat U\left( {{{\mathbf{\Phi }}^\prime }} \right)}}{\varpi }} \right)} \mathord{\left/
 {\vphantom {{\exp \left( { - \frac{{\hat U\left( {{{\mathbf{\Phi }}^\prime }} \right)}}{\varpi }} \right)} {\max \left\{ {\exp \left( { - \frac{{\hat U\left( {\mathbf{\Phi }} \right)}}{\varpi }} \right),\exp \left( { - \frac{{U\left( {{{\mathbf{\Phi }}^\prime }} \right)}}{\varpi }} \right)} \right\}}}} \right.
 \kern-\nulldelimiterspace} {\max \left\{ {\exp \left( { - \frac{{\hat U\left( {\mathbf{\Phi }} \right)}}{\varpi }} \right),\exp \left( { - \frac{{U\left( {{{\mathbf{\Phi }}^\prime }} \right)}}{\varpi }} \right)} \right\}}}$. With the assumption that each thread has an equal probability of obtaining the lock of ${\mathbf{\Phi }}$, the transition probability from ${\mathbf{\Phi }}$ to ${{{\mathbf{\Phi }}^\prime }}$ can be specified as ${q_{{\mathbf{\Phi }},{\mathbf{\Phi '}}}} = \tfrac{1}{{N\left| {{\mathcal{V}_{\text{R}}}} \right|}}{{\bar q}_{{\mathbf{\Phi }},{\mathbf{\Phi '}}}}$, which is  consistent with the form of \eqref{26}, i.e., $\alpha  = \frac{1}{{N\left| {{\mathcal{V}_{\text{R}}}} \right|}}$. 
 Meanwhile, if ${\hat U}\left( {{\mathbf{\Phi }}'} \right) > {\hat U}\left( {{{\mathbf{\tilde \Phi }}}} \right)$, the thread updates global ${{\mathbf{\tilde \Phi }}}$ to ${{\mathbf{\Phi }}'}$. 
 Assume that the optimal solution can be found after $I$ iterations. The complexity of M-MTSA is $\mathcal{O}\left( {IN} \right)$. Compared to the centralized search algorithm with the complexity of $\mathcal{O}\left( {{{\left( {\left| {{\mathcal{V}_{\text{R}}}} \right| + 1} \right)}^N}} \right)$, the complexity is greatly reduced.

Moreover, we devise an SU assignment algorithm as the baseline following the idea of the elastic-segment-based V2V/V2I cooperative strategy~\cite{liu2022elastic}, where the relays are selected in the order of encounter with the target vehicle until the requested data transmission is completed. Moreover, we treat the strategy generated by this algorithm as the initial feasible state of M-MTSA for easier capture of the optimal solution.  
 Next, we present the details about the SU assignment algorithm, which is outlined in Algorithm~1. Considering the preference for V2I links, at the beginning, all the SUs are assigned to the target vehicle $v_0$. Then, Algorithm 1 pre-checks whether the transmission of the assigned SUs can be completed, (i.e., constraint~\eqref{C1}). If not, the excessive SUs are moved to be transmitted by the next relay vehicle to encounter.  To mitigate the idleness of V2I link caused by the  non-divisibility of SU, the SUs with small data volume are moved in priority. The detail of the process is outlined in Lines 6--12. Then, Algorithm 1 calculates the transmission start moment after the encounter with the next relay vehicle, which is shown in Lines 13--17. After that, Algorithm 1 checks if constraint \eqref{C2} for the relay vehicle is satisfied. Then, Algorithm 1 repeats the above process.

\section{Simulation}
\subsection{Simulation Setup}
In the simulation, we focus on a segment of a road with two RSUs. The parameters related to the communication scenarios are summarized in Table~\ref{parameter}. {\color{Blues}In our system, the small-scale fading occurring in each transmission slot, with a duration of 1 ms, is generated by utilizing realizations of the square of the (random) small-scale channel coefficients according to~\cite[Eq.(1)]{8638956}.} The initial positions of both the target vehicle and the relay vehicles are randomly generated with a uniform distribution within $\left( { - {r_{\text{I}}},{r_{\text{I}}}} \right)$. The vehicle trajectories are generated with  SUMO, where the average routing speed and the traffic density are set 13.89 m/s and 10 vehicle/km per lane.  Moreover, vehicles can be distinguished according to the setting of attribute parameters, such as acceleration, deceleration, sigma, and maximum speed.  Three representative trajectories are shown in Fig.~\ref{SUMO}, where we can see that although the speed varies noticeably on the small time scale, the distance driven cumulatively from a  large time scale is close to the distance driven with its average speed. This  validates the rationality of analysis based on historical average speed in the proposed PreCMTS in an intuitive way. 
\begin{table*}
\renewcommand\arraystretch{1}
\scriptsize
\centering
\caption{Main Simulation Parameters.}
\label{parameter}
\begin{tabular}{|m{2.3cm} m{2.5cm}|m{2.3cm} m{2cm}|m{2.3cm} m{2cm}|}
\hline
\bf Parameters & \bf Settings & \bf Parameters & \bf Settings & \bf Parameters & \bf Settings\\
\hline
\hline
RSU coverage radius &  ${r_{\text{I}}} = 500~\text{m}$~\cite{liu2022elastic} & Vehicle coverage radius & ${r_{\text{V}}} = 300~\text{m}$~\cite{liu2022elastic} & Distance between two RSUs & $H = 1500~\text{m}$~\cite{liu2022elastic}\\ \hline
V2I channel model & ${b_{{\text{I}}}} =  1$ \newline ${a_{{\text{I}}}} = 2.2$~\cite{wu2021v2v} & V2V channel model & ${b_{{\text{V2V}}}} =  1$ \newline  ${a_{{\text{V2V}}}} = 2$~\cite{wu2021v2v}&  {\color{Blues}Average small-scale channl gain} & {\color{Blues}$\bar g = 1 \text{dB}$} \\ \hline
{\color{Blues}Fading severity} & {\color{Blues}$m=6$~\cite{8638956}} &  {\color{Blues}Shadowing shape} & {\color{Blues}$m_s = 6$~\cite{8638956}} & Link bandwidth & $B = 1~\text{MHz}$\\ \hline 
 Noise & ${\sigma ^2} =  - 110$${\text{ dBm/Hz}}$ &  SNR threshold for RSU & ${\Gamma _{\text{I}}}$ = 15.27~dB & SNR threshold for vehicle & ${\Gamma _{\text{V}}}$ = 11.44~dB\\ \hline 
 
Joint antenna gain & ${G_{\text{I}}} = {G_{\text{V}}} = 1$~\cite{chen2018cvcg} & {Reliability for direct transmission} & ${\theta _{\text{T}}} = 1.5$ &  {Reliability  for relay transmission} & ${\theta _{\text{R}}} = 0.5$ \\ \hline
\end{tabular}
\vspace{-0.2cm}
\end{table*}

\begin{table*}
    \renewcommand\arraystretch{0.8}
\scriptsize
\centering
\caption{Set of SUs.}
\label{SBset}
    \begin{tabular}{|p{0.4cm}<{\centering}|p{0.4cm}<{\centering}|p{0.4cm}<{\centering}|p{0.4cm}<{\centering}|p{0.4cm}<{\centering}|p{0.4cm}<{\centering}|p{0.4cm}<{\centering}|p{0.4cm}<{\centering}|p{0.4cm}<{\centering}|p{0.4cm}<{\centering}|p{0.4cm}<{\centering}|p{0.4cm}<{\centering}|p{0.4cm}<{\centering}|p{0.4cm}<{\centering}|p{0.4cm}<{\centering}|p{0.4cm}<{\centering}|}
    \cline{1-16}

     & \cellcolor[HTML]{C0C0C0}SU & \cellcolor[HTML]{C0C0C0}a & \cellcolor[HTML]{C0C0C0}b & \cellcolor[HTML]{C0C0C0}c & \cellcolor[HTML]{C0C0C0}d & \cellcolor[HTML]{C0C0C0}e& \cellcolor[HTML]{C0C0C0}f & \cellcolor[HTML]{C0C0C0}g & \cellcolor[HTML]{C0C0C0}h & \cellcolor[HTML]{C0C0C0}i & \cellcolor[HTML]{C0C0C0}j & \cellcolor[HTML]{C0C0C0}k & \cellcolor[HTML]{C0C0C0}l & \cellcolor[HTML]{C0C0C0}m & \cellcolor[HTML]{C0C0C0}n  \\ \cline{2-16}
      SR1   & $\beta$ & 11 & 14 & 15 & 24 & 3 & 20 & 23 & 4 & 1 & 22 & 7 & 9 & 4 & 8 \\ 
     & $\alpha$ & 0.86 & 1.08 & 1.17 & 1.87 & 0.23  & 1.56 & 1.79 & 0.31 & 0.08 & 1.71 & 0.54 & 0.70 & 0.31 & 0.62 \\ \hline
      & \cellcolor[HTML]{C0C0C0}SU & \cellcolor[HTML]{C0C0C0}a & \cellcolor[HTML]{C0C0C0}b & \cellcolor[HTML]{C0C0C0}c & \cellcolor[HTML]{C0C0C0}d &  \cellcolor[HTML]{C0C0C0}e &\cellcolor[HTML]{C0C0C0}f & \cellcolor[HTML]{C0C0C0}g & \cellcolor[HTML]{C0C0C0}h & \cellcolor[HTML]{C0C0C0}i & \cellcolor[HTML]{C0C0C0}j & \cellcolor[HTML]{C0C0C0}k & \cellcolor[HTML]{C0C0C0}l & \cellcolor[HTML]{C0C0C0}m & \cellcolor[HTML]{C0C0C0}n \\ \cline{2-16}
    \raisebox{-1.5\normalbaselineskip}[0pt][0pt]{{{SR2}}}& $\beta$ & 11 & 14 & 15 & 24 & 3 & 20 & 23 & 4 & 1 & 22 & 7 & 9 & 4 & 8  \\
      & $\alpha$ & 0.86 & 1.08 & 1.17 & 1.87 & 0.23  & 1.56 & 1.79 & 0.31 & 0.08 & 1.71 & 0.54 & 0.70 & 0.31 & 0.62 \\ \cline{2-16} 
       & \cellcolor[HTML]{C0C0C0}SU & \cellcolor[HTML]{C0C0C0}o & \cellcolor[HTML]{C0C0C0}p & \cellcolor[HTML]{C0C0C0}q & \cellcolor[HTML]{C0C0C0}r &  \cellcolor[HTML]{C0C0C0}s\\ \cline{2-7} 
       & $\beta$& 17 & 12 & 5 & 1 & 9\\ 
      &$\alpha$ & 0.06 & 0.29 & 0.62 & 0.10 & 0.59 \\\cline{1-7}

    \end{tabular}
    \label{tab:my_label}
    \vspace{-0.7cm}
\end{table*}

\begin{figure*}[t]
\begin{minipage}[t]{0.48\linewidth}
 \centering
 \includegraphics[scale = 0.45]{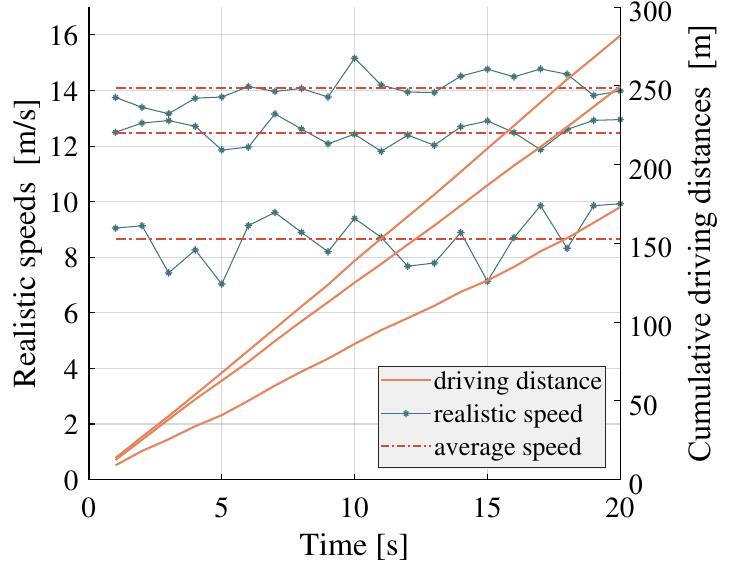}\\
 \caption{Realistic trajectory and speed generated with SUMO.
 }
 \label{SUMO}
\end{minipage}
\quad
\begin{minipage}[t]{0.48\linewidth}
 \centering
 \includegraphics[scale = 0.45]{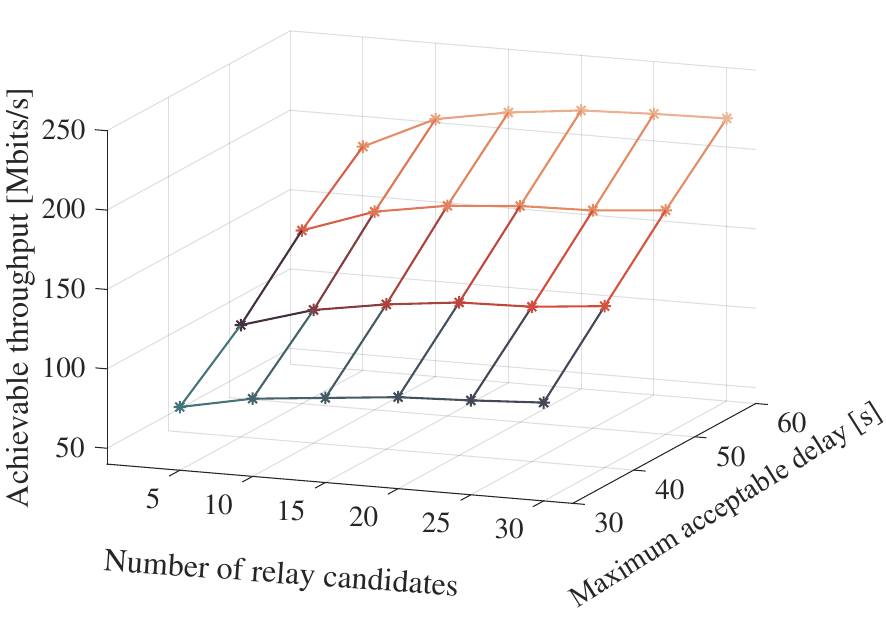}\\
 \caption{{\color{Blues}Achievable throughput with different $T_{\max}$
and $\left| {{{\mathcal{V}}_{\rm{R}}}} \right|$.}
 }
 \label{throughput}
\end{minipage}
\vspace{0cm}
\end{figure*}
\begin{table*}[t]
\centering
\renewcommand{\arraystretch}{1}
\scriptsize
\arrayrulecolor{black}
\caption{The four results of baseline and PreCMTS with different $T_{\max}$.}
\label{strategy}
\begin{threeparttable}

\begin{tabular}{|m{0.3cm}<{\centering}m{0.9cm}<{\centering}|m{0.3cm}<{\centering} m{0.9cm}<{\centering}|m{0.3cm}<{\centering} m{0.3cm}<{\centering}|m{0.3cm}<{\centering} m{0.35cm}<{\centering}|m{0.3cm}<{\centering} c|m{0.3cm}<{\centering} m{0.3cm}<{\centering}|m{0.3cm}<{\centering} m{0.35cm}<{\centering}|m{0.3cm}<{\centering} m{0.3cm}<{\centering}|m{0.3cm}<{\centering} m{0.3cm}<{\centering}|m{0.3cm}<{\centering} m{0.38cm}<{\centering}|} 
\hline
\multicolumn{2}{|c|}{Baseline}                                                                                                                                                                                                                                    & \multicolumn{4}{c|}{PreCMTS ($T_{\max}=40$ s)}                                                                                                                 & \multicolumn{6}{c|}{PreCMTS ($T_{\max}=50$ s)}                                                                                                                         & \multicolumn{8}{c|}{PreCMTS ($T_{\max}=60$ s)}                                                                                                                                                                       \\ 
\hline
Veh.                                                                                                                                                                           &SU                                                                              &Veh.                                                        &SU                                                                             &Veh.     &SU &Veh.                                                        &SU                                                                      &Veh.  &SU  &Veh.     &SU &Veh.                                                        &SU                                                                      &Veh.  &SU &Veh.     &SU &Veh.                 &SU                   \\ 
\hline
{\cellcolor[rgb]{0.886,0.886,0.886}}                                                                                                                                           & \multirow{2}{*}{\begin{tabular}[c]{@{}c@{}}a, b, c, d, \\f, 
 g, j, l\end{tabular}} & {\cellcolor[rgb]{0.886,0.886,0.886}}                        & \multirow{2}{*}{\begin{tabular}[c]{@{}c@{}}b, c, d, g,\\ j, l, n\end{tabular}} & $v_9$    & f   & {\cellcolor[rgb]{0.886,0.886,0.886}}                        & \multirow{3}{*}{\begin{tabular}[c]{@{}c@{}}b, c, \\ d, j, \\l, n\end{tabular}} & $v_1$ & h, k & $v_8$    & g   & {\cellcolor[rgb]{0.886,0.886,0.886}}                        & \multirow{3}{*}{\begin{tabular}[c]{@{}c@{}}b, c,\\ d, l,\\n\end{tabular}} & $v_1$ & h   & $v_9$    & f   & $v_{15}$             & a, j                  \\ 
\hhline{|>{\arrayrulecolor[rgb]{0.886,0.886,0.886}}-~-~>{\arrayrulecolor{black}}-->{\arrayrulecolor[rgb]{0.886,0.886,0.886}}-~>{\arrayrulecolor{black}}---->{\arrayrulecolor[rgb]{0.886,0.886,0.886}}-~>{\arrayrulecolor{black}}------|}
\multirow{-2}{*}{{\cellcolor[rgb]{0.886,0.886,0.886}}$v_0$\tnote{*}} &                                                                                  & \multirow{-2}{*}{{\cellcolor[rgb]{0.886,0.886,0.886}}$v_0$} &                                                                                 & $v_{10}$ & a   & {\cellcolor[rgb]{0.886,0.886,0.886}}                        &                                                                          & $v_3$ & e, i & $v_{12}$ & m   & {\cellcolor[rgb]{0.886,0.886,0.886}}                        &                                                                          & $v_6$ & e   & $v_{12}$ & m   & $v_{19}$             & k                     \\ 
\hhline{|------>{\arrayrulecolor[rgb]{0.886,0.886,0.886}}-~>{\arrayrulecolor{black}}---->{\arrayrulecolor[rgb]{0.886,0.886,0.886}}-~>{\arrayrulecolor{black}}------}
$v_1$                                                                                                                                                                          & e, h, i, k, m, n                                                                 & $v_1$                                                       & e, h, i, k                                                                      & $v_{12}$ & m   & \multirow{-3}{*}{{\cellcolor[rgb]{0.886,0.886,0.886}}$v_0$} &                                                                          & $v_6$ & f    & $v_{14}$ & a   & \multirow{-3}{*}{{\cellcolor[rgb]{0.886,0.886,0.886}}$v_0$} &                                                                          & $v_7$ & i   & $v_{13}$ & g   & \multicolumn{1}{c}{} & \multicolumn{1}{c}{}  \\
\hhline{|------------------~~}
\end{tabular}

\begin{tablenotes}
\item[*] {\scriptsize The shaded cells indicate the target vehicle, the other vehicle indexes indicate the selected relay vehicles in each strategy. }
\end{tablenotes}

\end{threeparttable}
\vspace{-0.5cm}
\end{table*}
\begin{figure*}[t]
  \centering
  \subfigure[]{
  \centering
  \includegraphics[scale = 0.45]{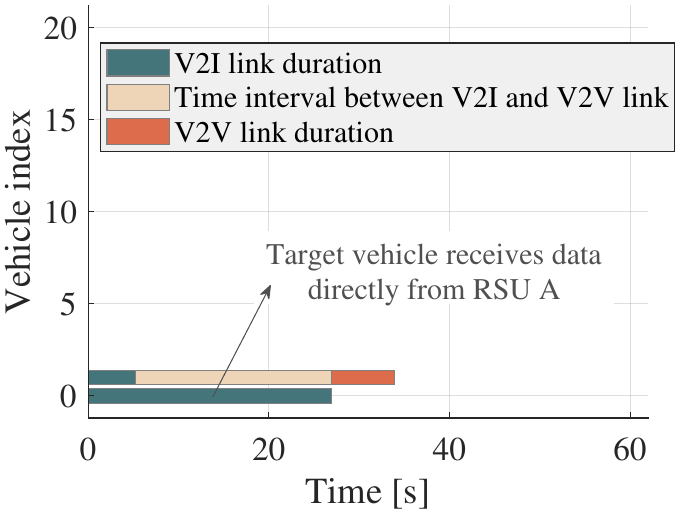}
  }
  \qquad \qquad
  \subfigure[]{
  \centering
  \includegraphics[scale = 0.45]{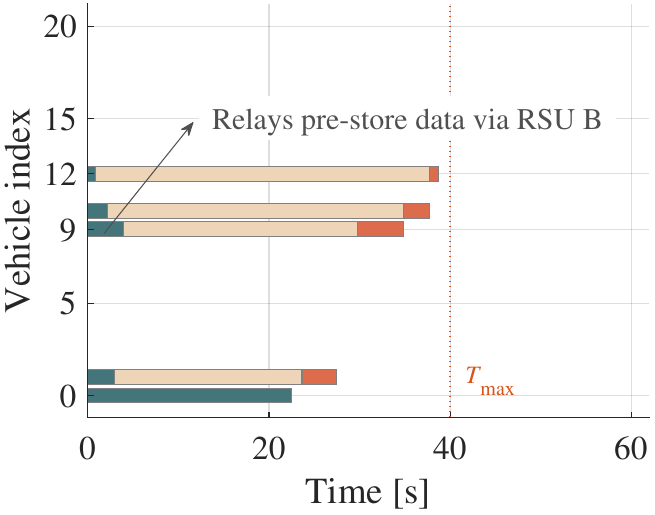}
  } 

\subfigure[]{
  \centering
  \includegraphics[scale = 0.45]{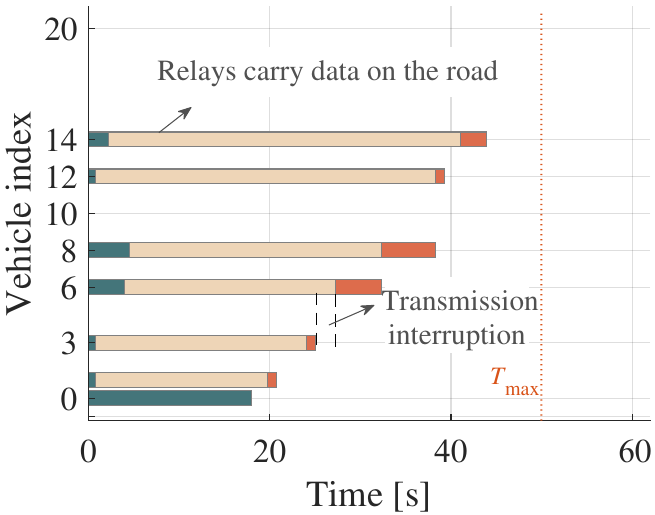}
  }
  \qquad \qquad
  \subfigure[]{
  \centering
  \includegraphics[scale = 0.45]{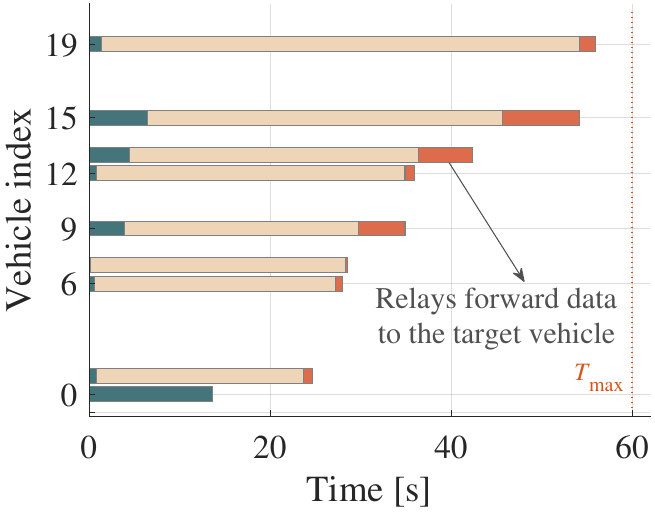}
  }
  \caption{ Comparison of the four strategies. (a) Baseline; (b) PreCMTS with $T_{\max} = 40$ s;  (c) PreCMTS with $T_{\max} = 50$ s; (d) PreCMTS with $T_{\max} = 60$ s.
  }
  \label{Comparisonduration} 
  \vspace{-0.2cm}
\end{figure*}

\subsection{Performance Evaluation}
The achievable throughput within $T_{\max}$ not only determines  the maximum data volume of the SR that can be supported for the current scenario, but also the optimizable space of PreCMTS given a selected SR.
In this sense, before the evaluation of the proposed PreCMTS, we first show the average achievable throughput of 50 simulations with randomly generated initial positions under different maximum acceptable delay $T_{\max}$ and different numbers of relay candidates $\left| {{\mathcal{V}_{\text{R}}}} \right|$ in Fig.~\ref{throughput}.  From Fig.~\ref{throughput}, we can observe that as the number of relay candidates increases and the delay requirement is relaxed, there is an increase in the achievable throughput. Specifically, with the increase of the maximum acceptable delay, the rise in the achievable throughput, as the number of vehicles in RSU B increases, is more  significant. This is because the vehicles are randomly distributed within RSU B. When the value of $T_{\max}$ is small, a high percentage of vehicles encounter the target vehicle exceed the maximum acceptable delay, which fails to contribute to the throughput. Moreover, since only one V2V link can exist at any moment, when the number of vehicle candidates reaches a certain threshold, the increase in throughput becomes flat. Furthermore,  the threshold value of the number of relay candidates decreases with the increase of the maximum acceptable delay.

Next,  we take take an example with 20 relay candidates to evaluate the performance of the  PreCMTS with ${\kappa _1} = 0.5,{\kappa _2} = 0.1$. To demonstrate its superiority, we evaluate the PreCMTS under different maximum acceptable delay, i.e.,  ${T_{\max }} = 40{\text{ s}}$, ${T_{\max }} = 50{\text{ s}}$, and ${T_{\max }} = 60{\text{ s}}$ with the baseline devised according to~\cite{liu2022elastic}. Based on \eqref{Qmax}, we have ${Q_{\max }} = 186.6$~Mbits within 40 s. Accordingly, with out loss of generality, we generate a SR indexed by SR 1 as Table~\ref{SBset} with the total volume of 165 Mbits and semantic accuracy of 12.84 to perform the simulation. The four strategies derived in the above four cases are presented in Table.~\ref{strategy}, where the SUs within the SR 1 that each vehicle should transmit are determined.  A more intuitive presentation is in Fig.~\ref{Comparisonduration}. Moreover, the practical trajectory information of all the relay candidates are shown in Table.~\ref{Distance} and Fig.~\ref{V2V}. At last, the performance of PreCMTS in terms of energy efficient and semantic  reliability are shown in Fig.~\ref{SSSS}, respectively.

\begin{figure*}
  \centering
 \subfigure[]{
  \centering
  \includegraphics[scale = 0.37]{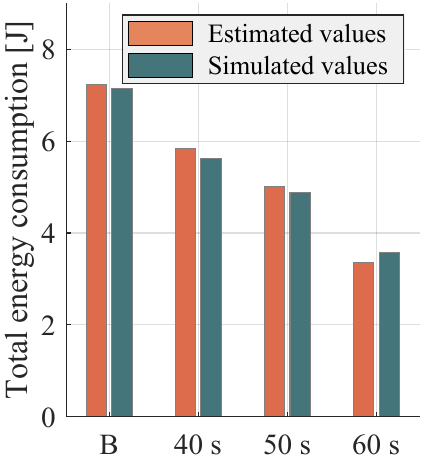}
  }  
   \centering
 \subfigure[]{
  \centering
  \includegraphics[scale = 0.37]{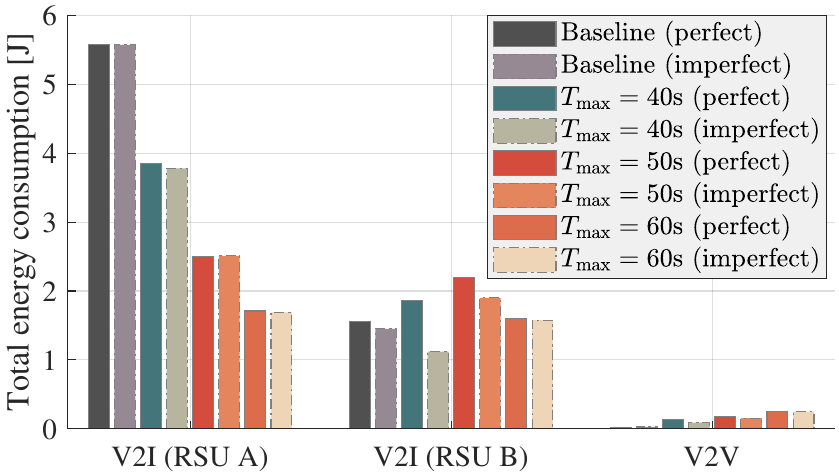}
  }  
  \centering
 \subfigure[]{
  \centering
  \includegraphics[scale = 0.37]{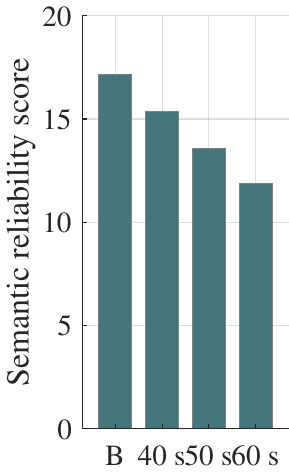}
  }
\subfigure[]{
  \centering
  \includegraphics[scale = 0.37]{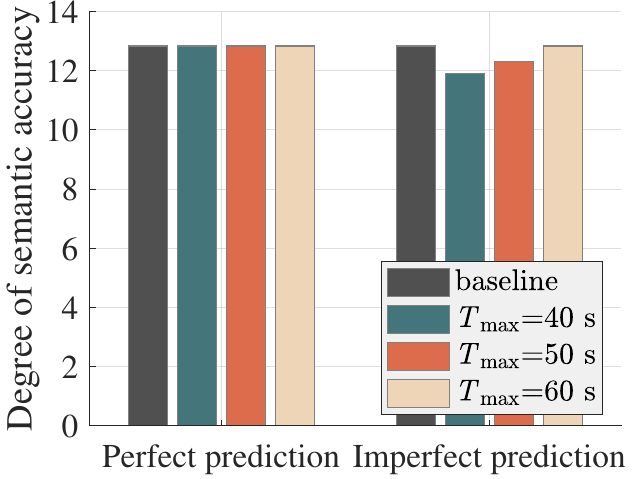}
  }
  \caption{ {\color{Blues}Performance analysis. (a) Comparison of theoretical and simulated values of cumulative energy consumption. (b) Comparison of energy consumption of the strategies. }(c) Comparison of semantic reliability scores; (d) Comparison of degree of semantic accuracy.
  }
  \label{SSSS}
\end{figure*}

{\color{Blues}Overall, as shown in Fig.~\ref{SSSS}(a), the simulated cumulative energy consumption basically coincides the theoretical estimated value, which supports the rationality of performance evaluation in terms of energy consumption. Moreover, it is evident that the overall energy consumption experiences a significant reduction when the delay requirements are relaxed following the optimization of the proposed PreCMTS. Specifically,} by comparing Fig.~\ref{Comparisonduration} and Table~\ref{Distance}, we can observe that in the baseline, the target vehicle consistently maintains the V2I link within the coverage area of RSU A. Owing to  ${R_{\text{I}}} > R_{\text{V}}$, the baseline achieves the lowest transmission delay. However, due to the low channel gain at the edge of coverage, such a V2I link-first mechanism would cause significant energy waste. In our PreCMTS, by optimizing the relay selection and SU assignment within $T_{\max}$, it selectively assigns partial SUs to the store-carry-forward links. In this way, the energy efficiency  in RSU A can remarkably increase, which can be verified by calculating the ratio of the V2I link duration of $v_0$ (or the data volume assigned to $v_0$)  and the  energy consumption of V2I link in RSU~A  in the four strategies according to Figs.~\ref{Comparisonduration} and~\ref{SSSS}(b). Moreover, form Fig.~\ref{Comparisonduration}(b)-(d), with the relaxation of time delay requirements,  the relays closer to RSU B, such as  $v_{14}$, and $v_{15}$ are more likely to be selected under different delay requirements to enhance energy efficiency. This also allows the fact shown in Fig.~\ref{SSSS}(b) that the total energy consumption of the V2I links in RSU B does not increase monotonically with the  data volume transmitted. Meanwhile, the vehicles far from the RSU B such as $v_2$ and $v_4$ are missed in 
the all the strategies, even if  this leads to an avoidable transmission interruption implied by Fig.~\ref{V2V}. This is the key reason why the PreCMTS can achieve higher energy efficiency compared with the existing schemes.  Moreover, comparing Figs.~\ref{Comparisonduration} and~\ref{V2V}, most V2V links are established after a period of time when the relay encounters the target vehicle. This means that the transmission distance of V2V links is generally shorter and thus more energy can be saved. Therefore, the energy consumption of V2V links is significantly smaller than that of the V2I links. 
Moreover, as shown in Fig.~\ref{SSSS}(c), as more SUs are assigned to the relays, the semantic reliability score becomes smaller, which is consistent with the definition in \eqref{DDD}. However, since we minimize energy consumption while optimizing semantic reliability, SUs with less semantic importance are preferentially assigned to other relays, which can be seen by comparing Tables~\ref{SBset} and \ref{strategy}. Since  the SUs with large semantic importance are mostly assigned to the direct link, the transmission of them is completed with priority. Therefore, in some cases with imperfect speed prediction, there is no remarkable decrease in the degree of the semantic accuracy as shown in Fig.~\ref{SSSS}(d). Moreover, in some cases, the SR can still be fully transmission with imperfect speed prediction, but it consumes more energy. This shows the adaptability of the proposed strategy to sudden changes in the vehicular environment.

\begin{table}[]
    \centering
    \captionof{table}{\color{Blues}Initial locations and average speeds of vehicles.}
\centering%
\renewcommand\arraystretch{0.8}
\scriptsize
\begin{tabular}{|m{0.3cm}<{\centering}|m{0.5cm}<{\centering}|m{0.5cm}<{\centering}|m{0.5cm}<{\centering}|m{0.5cm}<{\centering}|m{0.5cm}<{\centering}|m{0.5cm}<{\centering}|m{0.5cm}<{\centering}|} 
\hline
\rowcolor[rgb]{0.812,0.816,0.816} Veh. & $v_0$    & $v_1$    & $v_2$    & $v_3$    & $v_4$    & $v_5$    & $v_6$     \\ 
\hline
$l_i$                                  & 200   & 382   & 484   & 403   & 438  & 340 & 336    \\ 
\hline
${\bar u}_i$                           & 10.97    & 15.44    & 10.81    & 14.14    & 11.28    & 13.31    & 13.41     \\ 
\hline
\rowcolor[rgb]{0.812,0.816,0.816} Veh. & $v_7$    & $v_8$    & $v_9$    & $v_{10}$ & $v_{11}$ & $v_{12}$ & $v_{13}$  \\ 
\hline
$l_i$                                  & 317   & 260  & 308   & 214   & 253   & 220   & 281    \\ 
\hline
${\bar u}_i$                           & 13.10    & 14.03    & 12.30    & 13.30    & 11.41    & 11.81    & 8.89      \\ 
\hline
\rowcolor[rgb]{0.812,0.816,0.816} Veh. & $v_{14}$ & $v_{15}$ & $v_{16}$ & $v_{17}$ & $v_{18}$ & $v_{19}$ & $v_20$    \\ 
\hline
$l_i$                                  & 0.12     & $-39$   & $-50$   & $-10$   & $-112$  & $-202$  & $-254$   \\ 
\hline
${\bar u}_i$                           & 13.63    & 12.38    & 11.80    & 10.72    & 12.90    & 13.45    & 13.53     \\
\hline
\end{tabular}
\label{Distance}
\end{table}

\begin{figure}
    \centering
    \includegraphics[width=0.4\textwidth]{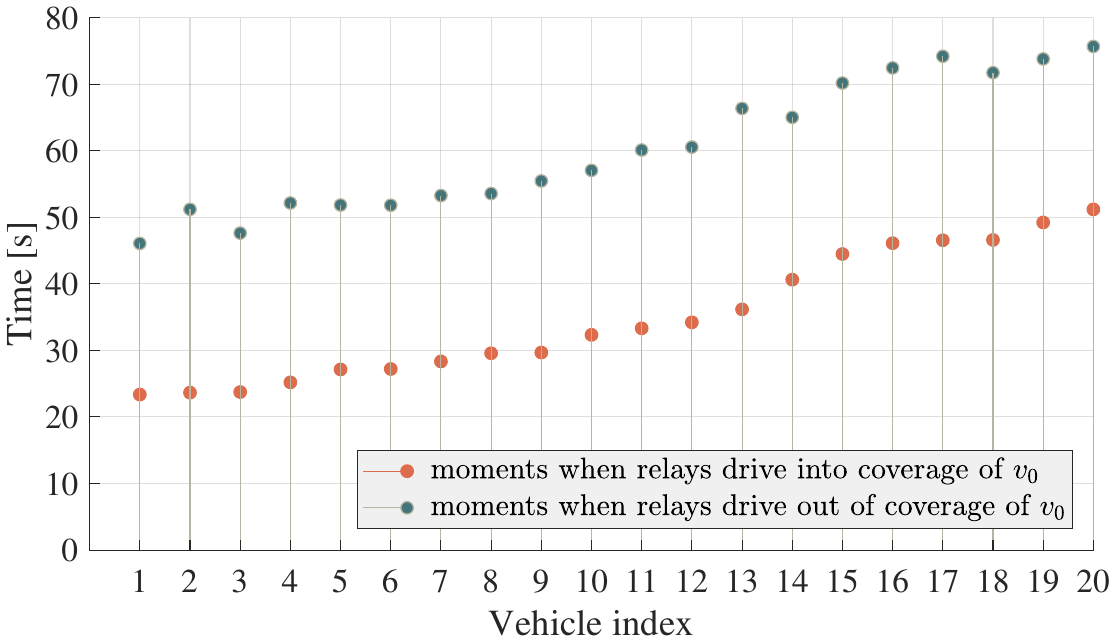}
\captionof{figure}{Encounter time between relay and target vehicle.}
\label{V2V}
\end{figure}

In addition, by adjusting the values of ${\kappa _1}$ and ${\kappa _2}$, the attention of PreCMTS to the energy consumption and semantic reliability of SR transmission can be adjusted. In Fig.~\ref{kkk1}, we compare  two PreCMTS under the settings of  ${\kappa _1} = 0.5$, ${\kappa _2} = 0.1$, and  {\color{Blues}${\kappa _1} = 0.1$, ${\kappa _2} = 0.5$}, respectively. The two specific strategy results can be found in Table.~\ref{strategy} and Table~\ref{strategy2}. 
As shown in Fig.~\ref{kkk1}(a), with the increase of the value of ${\kappa _2}$, the semantic reliability score is improved significantly, and the energy consumption is increased slightly. Moreover, as the SU with high semantic importance such SU $g$ and SU $f$ are assigned to $v_0$, the degree of semantic accuracy achieved by PreCMTS S is higher than that achieved by PreCMTS E, as shown in Fig.~\ref{kkk1}(c). In addition, it is to be noted that the above simulation only supports  that the values of ${\kappa _1}$ and ${\kappa _2}$ can effectively influence the strategy results. However, as there is no definite linear relationship between the semantic importance of an SU and its data volume, the optimal combination of values for ${\kappa _1}$ and ${\kappa _2}$ deserves further investigation.

\begin{table*}
\centering
\arrayrulecolor{black}
\renewcommand{\arraystretch}{1}
\scriptsize
\caption{The three results of PreCMTS under different SRs and $T_{\max}$.}
\label{strategy2}
\begin{threeparttable}
\begin{tabular}{|p{0.6cm}<{\centering}|c|p{0.6cm}<{\centering}|p{1.8cm}<{\centering}|p{0.6cm}<{\centering}|p{1cm}<{\centering}|p{0.6cm}<{\centering}|c|p{0.6cm}<{\centering}|p{0.6cm}<{\centering}|p{0.6cm}<{\centering}|p{0.6cm}<{\centering}|} 
\hline
\multicolumn{4}{|c|}{PreCMTS (${\kappa _1} = 0.1;{\kappa _2} = 0.5$)}                                                                                                                                                                              & \multicolumn{8}{c|}{PreCMTS ($T_{\max}=50$s, SR 2)}                                                                                                                                                                                                                                                                                                                                                                                               \\ 
\hline
Veh.                                                        & SU                                                                             & Veh.                      & SU                                                                      & Veh.                                                           & SU                                                                         & Veh.                       & SU                                                                & Veh.                          & SU                                                                & Veh.                      & SU                                                                 \\ 
\hline
{\cellcolor[rgb]{0.847,0.847,0.847}}                        & \multirow{2}{*}{\begin{tabular}[c]{@{}c@{}}b, c, d, f, \\g, j, l, n\end{tabular}} & \multirow{2}{*}{$v_{12}$} & \multirow{2}{*}{m}                                                      & {\cellcolor[rgb]{0.847,0.847,0.847}}                           & \multirow{2}{*}{\begin{tabular}[c]{@{}c@{}}b, c, d,\\f, g, j\end{tabular}} & \multirow{2}{*}{$v_3$}     & \multirow{2}{*}{\begin{tabular}[c]{@{}c@{}}m,\\n, q\end{tabular}} & \multirow{2}{*}{$v_4$}        & \multirow{2}{*}{\begin{tabular}[c]{@{}c@{}}e, h,\\r\end{tabular}} & $v_{14}$                  & o                                                                  \\ 
\hhline{|>{\arrayrulecolor[rgb]{0.847,0.847,0.847}}-~~~-~~~~~>{\arrayrulecolor{black}}--|}
\multirow{-2}{*}{{\cellcolor[rgb]{0.847,0.847,0.847}}$v_0$\tnote{*}} &                                                                                &                           &                                                                         & \multirow{-2}{*}{{\cellcolor[rgb]{0.847,0.847,0.847}}$v_0$}    &                                                                            &                            &                                                                   &                               &                                                                   & $v_{16}$                  & a                                                                  \\ 
\hline
$v_1$                                                       & e, h, i, k                                                                     & $v_{14}$                  & a                                                                   & $v_1$                                                          & p, s                                                                       & $v_9$                      & k                                                                 & $v_{10}$                      & l                                                                 & $v_{17}$                  & i                                                                  \\ 
\hline
\multicolumn{4}{|c|}{Baseline (SR2)   }                                                                                                                                                                                                                & \multicolumn{8}{c|}{PreCMTS ($T_{\max}=60$s, SR 2)}                                                                                                                                                                                                                                                                                                                                                                                               \\ 
\hline
Veh.                                                        & SU                                                                             & Veh.                      & SU                                                                      & Veh.                                                           & SU                                                                         & Veh.                       & SU                                                                & Veh.                          & SU                                                                & Veh.                      & SU                                                                 \\ 
\hline
{\cellcolor[rgb]{0.847,0.847,0.847}}                        & \multirow{2}{*}{\begin{tabular}[c]{@{}c@{}}b, c, d,\\f, g, j, o\end{tabular}}  & \multirow{2}{*}{$v_3$~}   & \multirow{2}{*}{\begin{tabular}[c]{@{}c@{}}k, l, m,\\n, q\end{tabular}} & {\cellcolor[rgb]{0.847,0.847,0.847}}                           & \multirow{2}{*}{\begin{tabular}[c]{@{}c@{}}b, c, \\d, f\end{tabular}}      & $v_3$                      & m, n                                                              & $v_{11}$                      & i                                                                 & \multirow{2}{*}{$v_{16}$} & \multirow{2}{*}{\begin{tabular}[c]{@{}c@{}}a, e,\\j\end{tabular}}  \\ 
\hhline{|>{\arrayrulecolor[rgb]{0.847,0.847,0.847}}-~~~-~>{\arrayrulecolor{black}}----~~|}
\multirow{-2}{*}{{\cellcolor[rgb]{0.847,0.847,0.847}}$v_0$} &                                                                                &                           &                                                                         & \multirow{-2}{*}{{\cellcolor[rgb]{0.847,0.847,0.847}}$v_{0}$~} &                                                                            & \multicolumn{1}{c|}{$v_4$} & \multicolumn{1}{c|}{h, r}                                         & \multicolumn{1}{c|}{$v_{13}$} & o                                                                 &                           &                                                                    \\ 
\hline
$v_1$                                                       & a, p, s                                                                        & $v_4$                     & e, h, i, r                                                              & $v_{1}$                                                        & p, s                                                                       & $v_9$                      & k, l                                                              & $v_{14}$                      & g                                                                 & $v_{20}$                  & q                                                                  \\
\hline
\end{tabular}

\begin{tablenotes}
\item[*] {\scriptsize The shaded cells indicate the target vehicle, the other vehicle indexes indicate the selected relay vehicles in each strategy.}
\end{tablenotes}

\end{threeparttable}
\end{table*}
\begin{figure*}[t]
 \centering
 \subfigure[]{
  \centering
  \includegraphics[scale = 0.4]{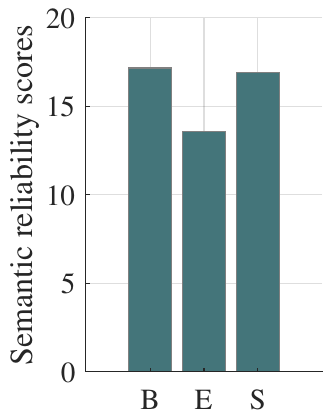}
  }
  \quad
\subfigure[]{
  \centering
  \includegraphics[scale = 0.4]{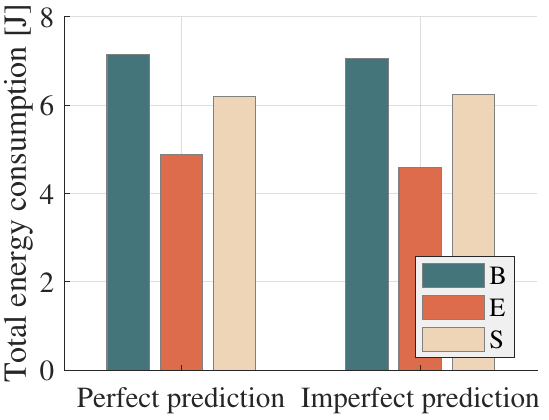}
  }
  \quad
  \subfigure[]{
  \centering
  \includegraphics[scale = 0.4]{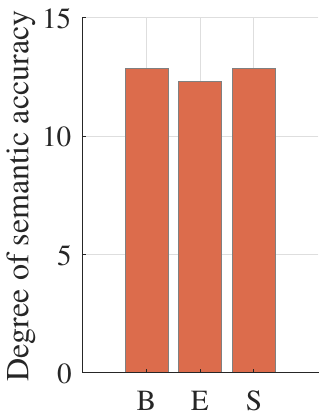}
  }
  \caption{\color{Blues} Comparison of PreCMTS with different ${\kappa _1}$ and ${\kappa _2}$, where ``B" represents the baseline, ``E" represents the PreCMTS with ${\kappa _1} = 0.5$ and ${\kappa _2} = 0.1$, and ``S" represents the PreCMTS with ${\kappa _1} = 0.1$ and ${\kappa _2} = 0.5$. (a) Semantic reliability scores; (b) Energy consumption; (c) Degree of semantic accuracy with imperfect speed prediction.
  }  \label{kkk1}
*
\end{figure*}

Furthermore, we evaluate the performance of PreCMTS under different SRs. {\color{Blues}According to \eqref{Qmax}, when the maximum acceptable delay extends to 50 s and 60 s, $Q_{\max} = 225.6$ Mbits and $Q_{\max} = 264.6$ Mbits, respectively.} We generate a second SR indexed by SR 2 as shown in Table.~\ref{SBset}. Considering that the SUs with small semantic contribution is filtered out in priority, the SUs with relative small value of ${\alpha _j}$ are added in SR 2.  Moreover, for a intuitive performance comparison, we define a new metric called semantic energy efficiency as the ratio of the degree of semantic accuracy and total energy consumption, i.e., ${\text{E}}{{\text{E}}_{\text{S}}} = {{\sum\nolimits_{j = 1}^N {{\alpha _j}} } \mathord{\left/
 {\vphantom {{\sum\nolimits_{j = 1}^N {{\alpha _j}} } {\left( {{P_{{\text{V2V}}}} + {P_{{\text{V2I}}}}} \right)}}} \right.
 \kern-\nulldelimiterspace} {\left( {{P_{{\text{V2V}}}} + {P_{{\text{V2I}}}}} \right)}}$. As shown in Fig.~\ref{SEE}, as the added SUs is with relatively small semantic importance and random data volume, the semantic energy efficiency of the strategies with SR 2 is clearly lower than that achieved by the corresponding strategies with SR 1.  Meanwhile, with the extension of the acceptable delay, the semantic energy efficiency increases remarkably for both SR 1 and SR 2. Specifically, the semantic energy efficiency of the PreCMTS with SR 1 has a greater enhancement than that with SR 2. This means that the PreCMTS achieves superior performance at lower system loads. Thus, a trade-off between the energy efficiency and semantic accuracy degree can be made on a case-by-case basis.

\begin{figure}
     \centering
 \includegraphics[scale = 0.45]{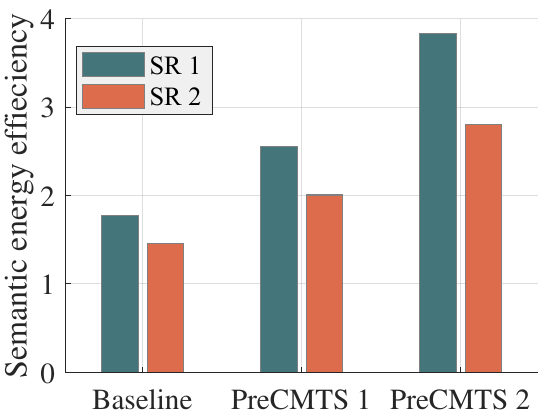}\\
 \caption{\color{Blues}Comparison of the PreCMTS with SR 1 and SR 2, where PreCMTS 1 represents the PreCMTS with $T_{\max} = 50$~s, and PreCMTS 2 represents the PreCMTS with $T_{\max} = 60$~s. }
 \label{SEE}
 \vspace{-0.6cm}
\end{figure}

\section{Conclusion}
In this paper, we have proposed a predictive cooperative multi-relay transmission strategy for bidirectional road scenarios. Specifically, we have introduced a general task-driven KG-assisted SemCom system  for complex vehicular network. To facilitate semantic-aware transmission, we have modeled the KG into a wDG. Next, for an appropriate SR, we  have
derived the closed-form expression for the achievable throughput for  within the maximum acceptable delay. Moreover,  we have formulated the relay vehicle selection and SU assignment as  a combinatorial optimization problem to optimize energy efficiency and semantic reliability. To finding a favorable solution within limited time, we have solved the problem with a low-complexity M-MTSA based on Markov approximation, where the solution is iteratively optimized. To demonstrate the feasibility of the PreCMTS, we have simulated it with realistic vehicle traces generated by SUMO. The high energy efficiency, semantic transmission reliability, and semantic energy efficiency of PreCMTS have been demonstrated with simulations.











\bibliographystyle{IEEEtran}
\bibliography{Semantic-aware-PRTS}

\begin{IEEEbiography}[{\includegraphics[width=1in,height=1.25in,clip,keepaspectratio]{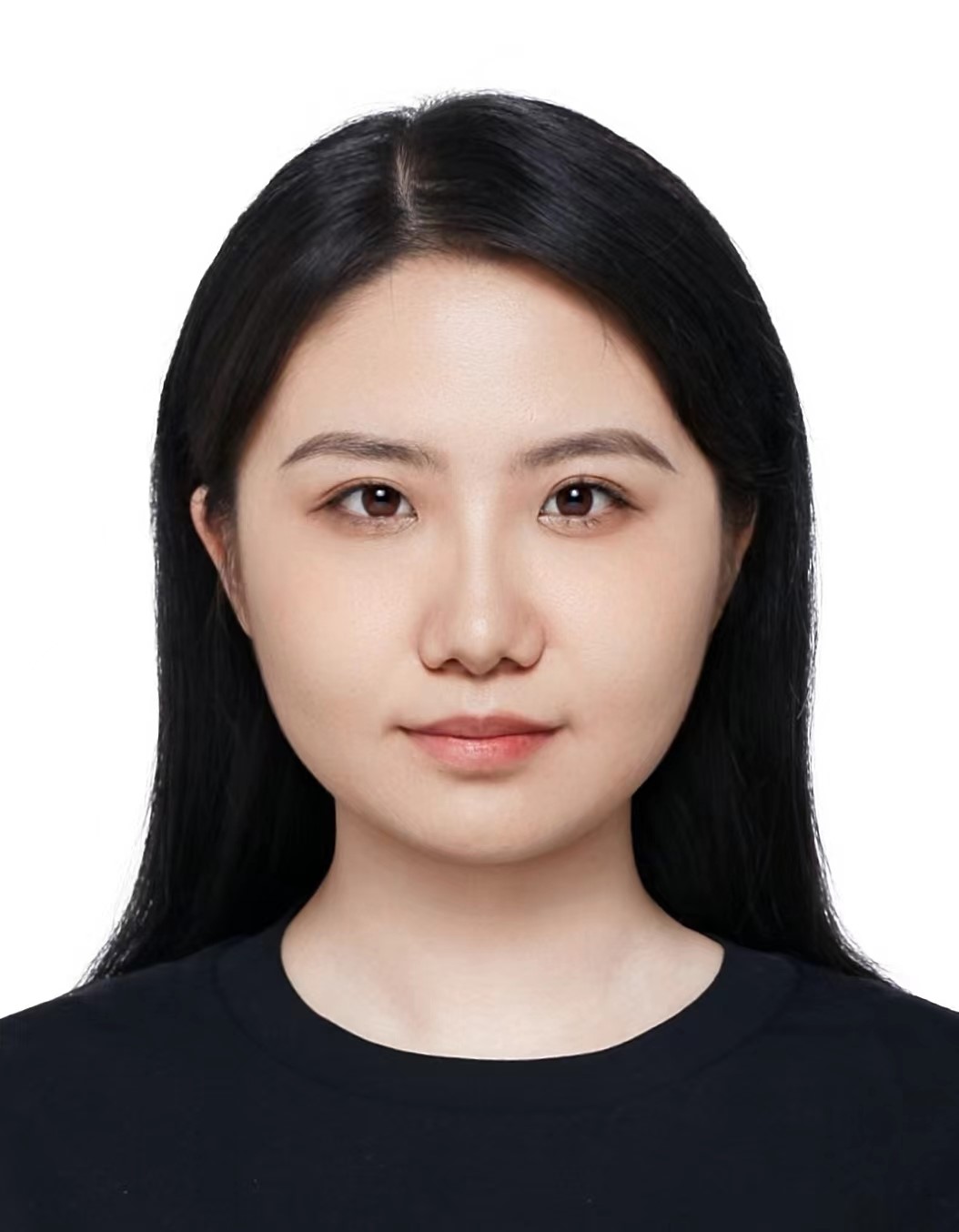}}]{Wanting Yang}

received the B.S. degree and the Ph.D. degree from the Department of Communications Engineering, Jilin University, Changchun, China, in 2018 and 2023, respectively. She was a visiting student at Singapore University of Technology and Design from 2021 to 2022, sponsored by the Chinese Scholarship Council. She
served as Technical Programme Committee member in flagship
conferences, such as WCNC, Globecom, and VTC. Her research interests include wireless communications, predictive resource allocation, semantic communication, learning, martingale and URLLC.
\end{IEEEbiography}

\begin{IEEEbiography}[{\includegraphics[width=1in,height=1.25in,clip,keepaspectratio]{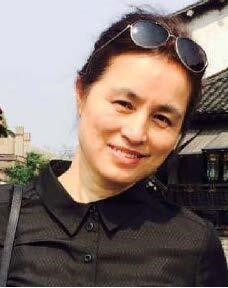}}]{Xuefen Chi}
received the B.Eng. degree in applied physics from the Beijing University of Posts and
Telecommunications, Beijing, China, in 1984, and the M.S. and Ph.D. degrees from the Changchun Institute of Optics, Fine Mechanics and Physics, Chinese Academy of Sciences, Changchun, China, in 1990 and 2003, respectively. She was a Visiting Scholar with the Department of Computer Science, Loughborough University, U.K., in 2007, and the School of Electronics and Computer Science, University of Southampton, Southampton, U.K., in 2015. She is currently a Professor with the Department of Communications Engineering, Jilin University, China. Her research interests include machine-type communications, indoor visible light communications, random access algorithms, delay-QoS guarantees, and network modeling theory and its applications.
\end{IEEEbiography}

\begin{IEEEbiography}[{\includegraphics[width=1in,height=1.25in,clip,keepaspectratio]{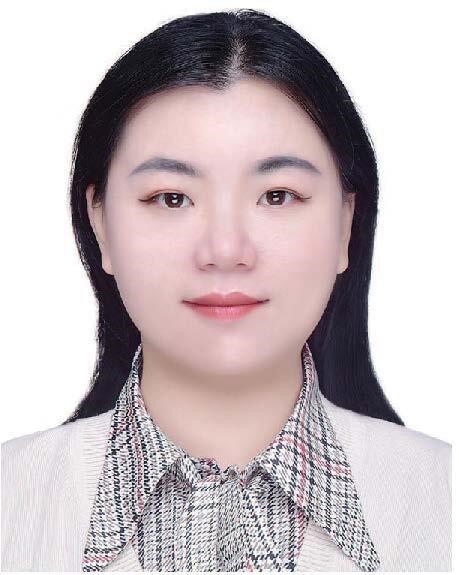}}]{Linlin Zhao}
received the B.Eng., M.S., and Ph.D. degrees from the Department of Communications Engineering, Jilin University, Changchun, China, in 2009, 2012, and 2017, respectively. From 2017 to 2019, she was a Post-Doctoral Researcher with the Department of Communications
Engineering, Jilin University. She is currently an Associate Professor with the Department of Communications Engineering, Jilin University, and a Post-Doctoral Research Fellow with the State Key Laboratory of Internet of Things for Smart City, University of Macau. Her current research interests include throughput optimal random access algorithms, resource allocation schemes, and delay and reliability analysis and optimization, especially for reliability analysis of ultra-reliable low-latency communications. She was a recipient of the Best Ph.D. Thesis Award of Jilin University in 2017, and acquired the Macau Young Scholars Program in 2019. She has served as the Registration Co-Chair for IEEE ICCC 2019.
\end{IEEEbiography}

\begin{IEEEbiography}[{\includegraphics[width=1in,height=1.25in,clip,keepaspectratio]{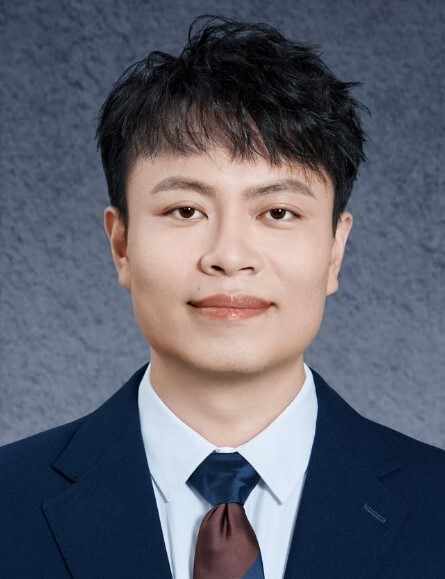}}]{Zehui Xiong}
is currently an Assistant Professor at Singapore University of Technology and Design, and also an Honorary Adjunct Senior Research Scientist with Alibaba-NTU Singapore Joint Research Institute, Singapore. He received the PhD degree in Nanyang Technological University (NTU), Singapore. He was the visiting scholar at Princeton University and University of Waterloo. His research interests include wireless communications, Internet of Things, blockchain, edge intelligence, and Metaverse. He has published more than 200 research papers in leading journals and flagship conferences and many of them are ESI Highly Cited Papers. He has won over 10 Best Paper Awards in international conferences and is listed in the World’s Top 2\% Scientists identified by Stanford University. He is now serving as the editor or guest editor for many leading journals including IEEE Journal on Selected Areas in Communications, IEEE Transactions on Vehicular Technology, IEEE Internet of Things Journal, IEEE Transactions on Cognitive Communications and Networking, and IEEE Transactions on Network Science and Engineering. He is the recipient of IEEE Early Career Researcher Award for Excellence in Scalable Computing, IEEE Technical Committee on Blockchain and Distributed Ledger Technologies Early Career Award, IEEE Internet Technical Committee Early Achievement Award, IEEE TCSVC Rising Star Award, IEEE TCI Rising Star Award, IEEE TCCLD Rising Star Award, IEEE Best Land Transportation Paper Award, IEEE CSIM Technical Committee Best Journal Paper Award, IEEE SPCC Technical Committee Best Paper Award, IEEE VTS Singapore Best Paper Award, Chinese Government Award for Outstanding Students Abroad, and NTU SCSE Best PhD Thesis Runner-Up Award. He is now serving as the Associate Director of Future Communications R\&D Programme. In 2023, he was featured on the list of Forbes Asia 30 under 30.
\end{IEEEbiography}

\begin{IEEEbiography}[{\includegraphics[width=1in,height=1.25in,clip,keepaspectratio]{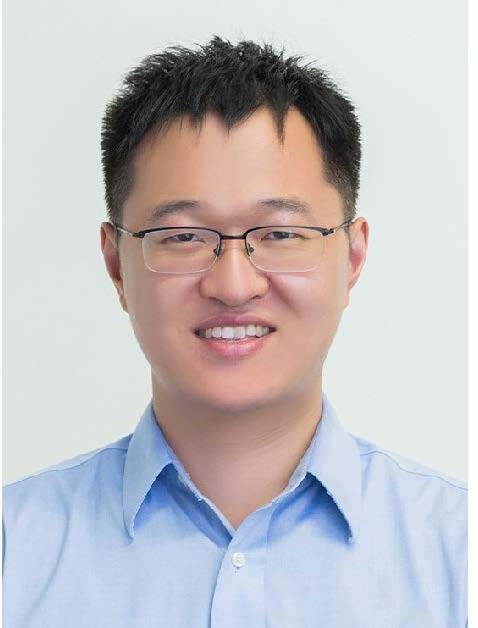 }}]{Wenchao Jiang}
received the Ph.D. degree from the Department of Computer Science and Engineering, University of Minnesota Twin Cities, in 2019. He is currently an Assistant Professor
with the Pillar of Information System Technology and Design, Singapore University of Technology and Design. His research interests include the Internet of Things, wireless and low-power embedded networks, and mobile computing.
\end{IEEEbiography}



\end{document}